\newtheorem{observation}{Observation}
\newtheorem{fact}{Fact}
\newtheorem{proposition}{Proposition}
\newtheorem{definition}{Definition}
\newtheorem{lemma}{Lemma}
\newtheorem{theorem}{Theorem}
\newtheorem*{theorem*}{Theorem}
\newtheorem{corollary}{Corollary}
\newcommand{\Tr}{\mathrm{Tr}}
\newcommand{\id}{\mathbb{1}}
\DeclarePairedDelimiter{\ceil}{\lceil}{\rceil}
\title{Matrix Multiplicative Weights Updates in Quantum Zero-Sum Games: Conservation Laws \& Recurrence}
\author{Rahul Jain\thanks{Centre for Quantum Technologies and Department of Computer Science, National University of Singapore and MajuLab, UMI3654, Singapore.}\\
  \texttt{rahul@comp.nus.edu.sg}\\
  \And
  Georgios Piliouras\thanks{Engineering Systems and Design, Singapore University of Technology and Design, Singapore} \\
  \texttt{georgios@sutd.edu.sg} \\
  \AND
  Ryann Sim$^\dagger$ \\
  \texttt{ryann\_sim@mymail.sutd.edu.sg} \\
  }
\date{}
\begin{document}

\maketitle


\begin{abstract}
Recent advances in quantum computing and in particular, the introduction of quantum GANs, have led to increased interest in quantum zero-sum game theory, extending the scope of learning algorithms for classical games into the quantum realm. In this paper, we focus on learning in quantum zero-sum games under \emph{Matrix Multiplicative Weights Update} (a generalization of the multiplicative weights update method) and its continuous analogue, \emph{Quantum Replicator Dynamics}. When each player selects their state according to quantum replicator dynamics, we show that the system exhibits conservation laws in a quantum-information theoretic sense. Moreover, we show that the system exhibits Poincar\'{e} recurrence, meaning that almost all orbits return arbitrarily close to their initial conditions infinitely often. Our analysis generalizes previous results in the case of classical games~\cite{Piliouras2014OptimizationDC,mertikopouloscycles}.

\end{abstract}

\section{Introduction}
\label{sec:intro}

The Nash equilibrium has always been a central concept in non-cooperative game theory. While the existence of Nash equilibria is well known in finite games where mixed strategies are allowed \cite{nash1951games}, it is less clear how to efficiently compute these equilibria themselves. Indeed, much work revolves around finding methods to solve for Nash equilibria in different contexts, as well as analyzing the complexity of solving for the Nash \cite{lemkehowson, daskalakis2009complexity}. 

Despite the dominance of equilibrium computation in classical theory, recent works have begun to move towards attempting to understand the nature of learning in games \cite{fudenberg1998theory, sandholm2010population}. The family of regret-minimizing algorithms known as Follow-The-Regularized-Leader (FTRL) have been studied extensively, both for discrete time \cite{bailey2018multiplicative, NEURIPS2018_90e13578, palaiopanos2017multiplicative, piliouras2021optimal} and continuous time \cite{mertikopouloscycles, flokas2019poincare, Piliouras2014OptimizationDC, perolat2020poincare} dynamics. These algorithms have found many applications in the domain of machine learning - one such example is that of Generative Adversarial Networks (GANs) \cite{goodfellow2014generative}, which have been successfully modeled using zero-sum game theory. This rich connection has led to a stronger understanding of adversarial learning and improved practical performance of GANs in various settings \cite{flokas2019poincare, kodali2017convergence, lin2020gradient}.

On the other side of the coin, quantum computing is a field which has garnered much interest in the machine learning community. Many quantum machine learning algorithms have been proposed, extending standard classical algorithms. One such example is quantum GANs \cite{dallaire2018quantum,lloyd2018quantum, chakrabarti2019quantum}. While such works describe the architecture of quantum GANs, we still do not fully understand the behavior of learning algorithms in the quantum setting. In this paper, we provide an initial analysis of learning in zero-sum quantum games. We do so by studying the learning behaviour of the ubiquitous Multiplicative Weights Update (MWU) update rule \cite{littlestone1994weighted, freund1999adaptive, arora_multiplicative}, a specific instance of the FTRL framework, when applied to such games. 


There have been several important results studying MWU and its continuous counterpart, \emph{replicator dynamics}. \cite{bailey2018multiplicative} showed that classical MWU does not converge in day-to-day behaviour to the mixed Nash equilibrium when applied to two player zero-sum games. \cite{mertikopouloscycles} showed that the orbits of players in zero-sum games exhibit Poincar\'{e} recurrence when using replicator dynamics. However, quantum systems can oftentimes behave in radically different ways to their classical counterparts. Indeed, allowing for quantum strategies can create situations wherein players can gain greater payoffs than if they were playing using only classical strategies (see e.g. the Mermin–Peres magic square game \cite{mermin1990simple,peres1990incompatible}). As such, our extension of the classical results to the quantum realm has to be treated with care.

We focus on a matrix generalization of MWU known as \emph{Matrix Multiplicative Weights Update} (MMWU), which changes the paradigm of standard MWU from cost vectors to cost matrices, and from probability vectors to density matrices. This generalization has been independently discovered and studied by \cite{tsuda2005matrix} as Matrix Exponentiated Gradient Updates and \cite{arora_multiplicative} as the Matrix Multiplicative Weights algorithm. Applications of the MMWU algorithm include solving semi-definite programs (SDPs) \cite{arora2016combinatorial} and obtaining bounds on the sample complexity for learning problems in quantum computing \cite{kale2007efficient}. From a game theoretic standpoint, \cite{jain2009parallel} analyzed the MMWU algorithm for quantum zero-sum games, proving time-average convergence to an approximate Nash equilibrium. We extend this result by studying the learning behaviour of MMWU in quantum zero-sum games, which could lay the groundwork for developing novel algorithms that achieve convergence to the Nash in day-to-day behaviour in quantum zero-sum games, similar to the case of their classical analogues~\cite{daskalakis2018training,mertikopoulos2019optimistic}.

\paragraph{Contributions.} In this paper, we first study the dynamics of MMWU in quantum zero-sum games, utilizing tools from information theory and classical game theory (Section \ref{sec:mwu}). We provide bounds on the rate of change of the total quantum relative entropy between a fully mixed Nash equilibrium and the evolving state of the system. We then study the continuous counterpart of MMWU, which we call quantum replicator dynamics (Section \ref{sec:replicator}). 
 We show that the aforementioned total quantum relative entropy is a constant of motion. Furthermore, the dynamics 
 do not converge to equilibrium (in the day-to-day sense) but rather exhibit a weaker form of approximate periodicity known as \emph{Poincar\'e recurrence}. Our proof of this result departs from the standard classical method, representing our main technical novelty. Finally, we present several simulations which corroborate our theoretical results (Section \ref{sec:experiments}).


\section{Preliminaries and Definitions}
\label{sec:prelims}

\subsection{Quantum Theory}

\paragraph{Basic concepts.}
Similarly to \cite{jain2009parallel} and in order to make the presentation self-contained, we will start by introducing some
 basic concepts of quantum information and game theory.

First, we refer to a quantum \emph{register} as a collection of qubits representing a message that is transferred from one party to another. We associate a vector space $\mathcal{H} = \mathbb{C}^n$ with any quantum register. This intuitively represents the maximum number of distinct classical states that can be stored in the register without error. The \emph{state} of a quantum register is represented by a \emph{density matrix}, which is an $n\times n$ positive semi-definite matrix with trace $1$. We will use $D(\mathcal{H})$ to denote the set of all density matrices associated with a register that is described by $\mathcal{H}$. One can naturally view such density matrices as linear operators acting on $\mathcal{H}$.

When two registers with associated spaces $\mathcal{A} = \mathbb{C}^n$ and $\mathcal{B} = \mathbb{C}^m$ are considered as a joint register, the associated space is the tensor product $\mathcal{A}\otimes \mathcal{B} = \mathbb{C}^{nm}$. If the two registers are independently prepared in states described by $\rho$ and $\sigma$, then the joint state is described by the $nm \times nm$ density matrix $\rho \otimes \sigma$. 

Next, for a given vector space $\mathcal{H}=\mathbb{C}^n$, we define $\text{L}(\mathcal{H})$ as the set of all $n\times n$ complex matrices. Furthermore, we denote the subset of $\text{L}(\mathcal{H})$ given by \emph{Hermitian} matrices as $\text{Herm}(\mathcal{H})$. A Hermitian matrix $A$ satisfies the equality $A=A^\dag$, where $A^\dag$ denotes the \emph{adjoint} (or conjugate transpose) of matrix $A$. Subsequently, we define $\text{Pos}(\mathcal{H})$ as the subset of $\text{Herm}(\mathcal{H})$ which consists of all positive semi-definite $n\times n$ matrices.

Finally, the \emph{Hilbert-Schmidt inner product} on $\text{L}(\mathcal{H})$ is defined as $\langle A, B\rangle = \Tr(A^\dag B)$
for all $A,B \in \text{L}(\mathcal{H})$. Note that $\langle A, B\rangle$ is a real number for any Hermitian matrices $A$ and $B$, and is also additionally non-negative if $A$ and $B$ are positive semi-definite.

\paragraph{Measurements and observables.} In the context of game theory, we are also interested in the concepts of quantum \emph{measurements} and \emph{observables}. An observable is simply a property of the quantum system which is measurable. The measurement of a register having associated vector space $\mathcal{H} = \mathbb{C}^n$ is a collection of linear operators $\{P_i : 1\leq i\leq k\} \subset \text{Pos}(\mathcal{H})$ which satisfies $\sum_{i=1}^k P_i=\mathbb{1}_\mathcal{H}$, where $\mathbb{1}_\mathcal{H}$ is the identity matrix on $\mathcal{H}$. If the register corresponding to $\mathcal{H}$ is in a state defined by density matrix $\rho$ and the measurement described by $P_i$ is performed, each outcome $i$ will be observed with probability $\langle P_i, \rho \rangle$. An important note is that two mixed states with the same density matrix are indistinguishable from each other by any measurement. 
\paragraph{Additional notation and definitions.} A linear mapping of the form $\Phi:\mathrm{L}(\mathcal{B}) \to \mathrm{L}(\mathcal{A})$ is called a super-operator. The adjoint super-operator to $\Phi$ is $\Phi^\dagger:\mathrm{L}(\mathcal{A}) \to \mathrm{L}(\mathcal{B})$ and is uniquely determined by the condition:
\begin{equation*}
    \langle A, \Phi(B)\rangle = \langle \Phi^\dagger(A), B\rangle 
\end{equation*}
A super-operator $\Phi:\mathrm{L}(\mathcal{B})\to\mathrm{L}(\mathcal{A})$ is said to be positive if $\Phi(P)$ is positive semi-definite for every choice of positive semi-definite operator $P \in \mathrm{Pos}(\mathcal{B})$. In addition, $\Phi^\dagger$ is positive if and only if $\Phi$ is positive.
There is a one-to-one and linear correspondence between the collection of operators of the form $R\in \mathrm{L}(\mathcal{A}\otimes\mathcal{B})$ and the collection of super-operators $\Phi:\mathrm{L}(\mathcal{B})\to\mathrm{L}(\mathcal{A})$ defined above. Specifically, for each super-operator $\Phi$ we can define an operator $R$ as follows:
\begin{equation}
    R = \sum_{1\leq i,j\leq m} \Phi(E_{i,j}) \otimes E_{i,j}
\end{equation}
where $E_{i,j}$ is the matrix with a $1$ in entry $(i,j)$ and $0$ elsewhere. Conversely, given an operator $R\in \mathrm{L}(\mathcal{A}\otimes\mathcal{B})$, we can define:
\begin{equation}\label{eqn:superoperator}
    \Phi(B) = \mathrm{Tr}_\mathcal{B} (R(\mathbb{1}_\mathcal{A}\otimes B^\top))
\end{equation}
This correspondence is linear and one can translate back and forth between the two as needed for an given application. $R$ is assumed to be positive semi-definite, so the corresponding super-operator $\Phi$ is also positive. 

\subsection{Quantum Game Theory}
A quantum system which can be manipulated by any number of agents, and where the utility of the moves is well defined, quantified and ordered can be conceived as a quantum game. In particular, a 2-player zero-sum quantum game sees players Alice and Bob each sending a quantum state to a referee, who then performs a measurement on these two states to determine their payoffs. Let $\mathcal{A}=\mathbb{C}^n$ and $\mathcal{B}=\mathbb{C}^m$ be the vector spaces that correspond to the state that Alice and Bob send to the referee. 


In order to determine the payoffs of the players' actions, the referee performs a joint measurement where Alice's and Bob's states are viewed as a single register. Thus, the referee's measurement can be described by a collection
\begin{equation}
    \{R_a : a\in\mathcal{S}\} \subset \text{Pos}(\mathcal{A}\otimes \mathcal{B})
\end{equation}
which satisfies the condition $\sum_{a=1}^k R_a = \mathbb{1}_{\mathcal{A}\otimes\mathcal{B}}$. We associate each possible measurement outcome $a$ with a payoff for each player. Since we are only considering  zero-sum games, if the payoff that Alice receives from the referee is $v(a)$, then Bob's corresponding payoff will be $-v(a)$. Henceforth, we refer to the states sent by Alice and Bob to the referee as $\rho$ and $\sigma$ respectively. For a given choice of $\rho$ and $\sigma$, Alice's expected payoff is:
\begin{equation}\label{eqn:payoffobservable}
    u(\rho, \sigma) = \sum_{a=1}^k v(a) \langle R_a, \rho\otimes \sigma \rangle = \langle R, \rho\otimes\sigma \rangle
\end{equation}
where $R = \sum_{a=1}^k v(a) R_a$. Likewise, Bob's corresponding expected payoff is $-u(\rho,\sigma) = -\langle R,\rho\otimes \sigma \rangle$. $R$ is referred to throughout the rest of the paper as a \emph{payoff observable}. A necessary and sufficient condition for matrix $R$ acting on $\mathcal{A}\otimes\mathcal{B}$ to be obtained from some real valued payoff function $v$ is that $R$ is Hermitian. From Equation \ref{eqn:superoperator}, we can equivalently define the expected payoff of Alice as $u(\rho, \sigma) = \langle \rho, \Phi(\sigma) \rangle$. We will use the latter formulation for the remainder of the paper.

A key notion of equilibrium in classical game theory is the Nash equilibrium. In the quantum setting, we define the pair of quantum states $(\rho^*, \sigma^*)$ as a Nash equilibrium of $R$ if 
\begin{equation}
    u(\rho^*, \sigma^*) \geq u(\rho, \sigma^*) \quad \text{and} \quad u(\rho^*, \sigma^*) \geq u(\rho^*, \sigma) 
\end{equation}
for all $\rho, \sigma$. That is, neither Alice nor Bob would prefer to unilaterally deviate from playing $\rho^*$ and $\sigma^*$ respectively.
On another note, since the set of available strategies (density matrices) for both agents is convex and compact and the utility function of Alice is bilinear, standard extensions of the 
of von Neumann’s Min-Max Theorem apply~\cite{fan1953minimax}.

Finally, we define the notion of a `fully mixed' Nash equilibrium. An equilibrium $(\rho^*,\sigma^*)$ is fully mixed if $\rho^*,\sigma^*$ are full rank.  

\subsection{Information Theory}

We also introduce several information theoretic concepts which will be referenced throughout the paper. 

\paragraph{Shannon entropy.} The Shannon entropy of a random variable $X$ where each strategy $x$ is obtained with probability $p(x)$ is given by $
     H(X) = -\sum_x p(x) \log{p(x)}$.
This intuitively is a measure of randomness or uncertainty in the system. A natural generalization of the Shannon entropy to a quantum context is the von Neumann entropy. For a quantum mechanical system defined by density matrix $\rho$, the von Neumann entropy is given by $S(\rho) = -\Tr (\rho \log{\rho})$.
 \paragraph{Bregman divergence.} We are also interested in the notion of Bregman divergence, which measures the distance between two points. Let $x^*\in\mathcal{X}$ be a Nash equilibrium and let $x\in\mathcal{X}$ be an arbitrary strategy profile. Also, let F be a continuously-differentiable, strictly convex function. The Bregman divergence from $x^*$ to $x$ is given by $D_{F} (x^* \| x) = F(x^*) - F(x) - \langle \nabla F(x), x^*-x \rangle$.

In the context of quantum games, the Bregman divergence is defined using matrix notation. We define the quantum relative entropy between two quantum states using the von Neumann entropy $S(\rho)$. In particular, the quantum relative entropy between two quantum states $\rho$ and $\sigma$ is defined as $S(\rho \| \sigma) = \mathrm{Tr}\left(\rho (\log \rho-\log \sigma)\right)$.

\subsection{Dynamical Systems}
\paragraph{Flows.}
Consider a differential equation $\dot{p} = f(p)$ on a topological space $\mathcal{P}$.
The existence and uniqueness theorem for ordinary differential equations guarantees that we can write the unique solution to the differential equation as a continuous map $\phi : \mathcal{P} \times \mathcal{H} \to \mathcal{P}$. This is referred to as the \emph{flow of the differential equation} such that for any point $p \in \mathcal{P}$, $\phi(p,-)$ defines a function of time corresponding to the trajectory of $p$. Conversely, fixing a time $t$ provides a map $\phi^t \equiv \phi(-,t) : \mathcal{P} \to \mathcal{P}$. In Section \ref{sec:replicator}, we introduce the notion of quantum replicator dynamics, which are Lipschitz continuous differential equations. Hence, a unique flow $\phi$ of these replicator dynamics exists.
\paragraph{Liouville's theorem.}
Liouville's theorem can be applied to any system of ordinary differential equations with a continuously differentiable vector field $\xi$ on an open domain $\mathcal{Y}\in \mathbb{R}^d$. The divergence of $\xi$ at $y\in \mathcal{Y}$ is the trace of the Jacobian at $y$: $\text{div }\xi(y) = \sum_{i=1}^d \frac{\partial \xi_i}{\partial y_i}(y)$. Because the divergence is continuous, it is integrable on Lebesgue measurable subsets of $\mathcal{Y}$. Given any such set C, define the image of C under flow $\phi$ at time $t$ as $C(t) = \{\phi(c,t) : c \in C\}$. $C(t)$ is measurable and of volume $\text{vol}[C(t)] = \int_{C(t)} d\mu$. Liouville's formula states that the time derivative of the volume $\text{vol}[C(t)]$ exists and links it to the divergence of $\xi$:
\begin{equation}
    \frac{d}{dt}\left[\text{vol }C(t)\right] = \int_{C(t)} \text{div}(\xi) d\mu
\end{equation}
If div $\xi(y)$ is null at any $y\in \mathcal{Y}$, then the volume is conserved. Since div $\xi$ is continuous, the converse statement is also true - if the volume is conserved on any open set, div $\xi(y)$ has to be null at any point $y\in \mathcal{Y}$.



\paragraph{Diffeomorphisms of flows.} A function $\mathbf{f}$ between two topological spaces is called a diffeomorphism if \begin{inparaenum}[i)]
\item $\mathbf{f}$ is a bijection,
\item $\mathbf{f}$ is continuously differentiable,
\item $\mathbf{f}$ has a continuously differentiable inverse.
\end{inparaenum}
Two flows $\Phi^t : A \to A$ and $\Psi^t : B \to B$ are diffeomorphic if there exists a diffeomorphism $\mathbf{f}:A\to B$ such that for each $x\in A$ and $t\in \mathbb{R}$, $\mathbf{f}(\Phi^t(p)) = \Psi^t(\mathbf{f}(p))$. For the purpose of our analysis, the replicator dynamics defined in Equations \ref{eqn:replicator} are translated via a diffeomorphism from the interior of $\mathcal{P}$ to a space $\mathcal{C} = \Pi_{i\in V} \mathbb{R}^{n-1}$, which allows us to show certain desirable properties.


\paragraph{Poincar\'{e} recurrence.} The concept of Poincar\'{e} recurrence arises from Henri Poincar\'{e}'s celebrated 1890 work regarding the three body problem \cite{poincare1890probleme}. He proved that if a dynamical system preserves volume and always remains bounded in its orbits, almost all trajectories return arbitrarily close to their initial position, and do so infinitely often. 

\begin{theorem}[Poincar\'{e} recurrence] \label{thm:poincareoriginal}
If a flow preserves volume and has only bounded orbits then for each open set there exist orbits that intersect the set infinitely often.
\end{theorem}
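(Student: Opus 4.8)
The plan is to reduce the continuous-time statement to a discrete-time, measure-preserving recurrence argument and then invoke a pigeonhole principle enabled by the finiteness of the total volume. First I would fix an arbitrary time step $\tau > 0$ and pass to the time-$\tau$ map $T = \phi^\tau$, which inherits volume preservation from the flow $\phi$. Since the flow has only bounded orbits, the union of all forward iterates of a bounded open set $U$ of positive volume is contained in a bounded region $W$, and boundedness guarantees $\mu(W) < \infty$, placing us in a finite-measure setting.

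The core is a measure-theoretic recurrence lemma: for a volume-preserving map $T$ on a finite-measure space, almost every point of $U$ returns to $U$ under iteration of $T$. To prove it I would let $N = \{x \in U : T^n(x) \notin U \text{ for all } n \geq 1\}$ be the set of points that never return. The key observation is that $N, T^{-1}(N), T^{-2}(N), \dots$ are pairwise disjoint: if $x \in N$ then every forward iterate of $x$ avoids $U \supseteq N$, so $x \notin T^{-n}(N)$ for each $n \geq 1$, and applying $T^{-m}$ propagates disjointness to all pairs. Since $T$ preserves $\mu$, each preimage satisfies $\mu(T^{-n}(N)) = \mu(N)$; were $\mu(N) > 0$, the disjoint union of these sets would have infinite measure, contradicting $\mu(W) < \infty$. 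Hence $\mu(N) = 0$, so almost every point of $U$ returns at least once.

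To upgrade a single return to infinitely many, I would apply the same argument to the first-return map or, equivalently, observe that the set of points returning only finitely often is a countable union of measure-zero ``eventually never return'' sets and is therefore null. Thus almost every point of $U$ is recurrent, returning to $U$ infinitely often; in particular at least one orbit does. Translating back to the flow is immediate, since a discrete orbit of $T = \phi^\tau$ meeting $U$ infinitely often is a subsequence of the corresponding continuous orbit, which therefore also meets $U$ infinitely often.

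The step I expect to be the main obstacle is linking the bounded-orbit hypothesis to a genuinely finite-volume invariant region, since the pigeonhole argument only bites once we know the total volume is finite; with that in hand, the volume-preservation hypothesis is precisely what equates the measures of all the preimages, and combining this equality with finiteness forces the non-returning set to be null. The remaining work — the disjointness bookkeeping and the ``infinitely often'' strengthening — is then measure-theoretic routine.
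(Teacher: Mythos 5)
A point of order first: the paper itself never proves Theorem \ref{thm:poincareoriginal}; it is quoted as Poincar\'e's classical theorem \cite{poincare1890probleme} and invoked as a black box in the proof of Theorem \ref{thm:recurrence}. So your proposal can only be compared against the standard measure-theoretic argument, which is indeed the skeleton you follow, and your discrete core is sound: the sets $N, T^{-1}(N), T^{-2}(N), \dots$ are pairwise disjoint, volume preservation equates their measures, finiteness of the ambient measure forces $\mu(N)=0$, the countable-union trick upgrades one return to infinitely many returns, and recurrence of the time-$\tau$ map transfers to the flow.

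The genuine gap is exactly the step you flagged and then waved through: producing the finite-measure region. ``Bounded orbits'' is a pointwise hypothesis---each orbit is bounded, with no uniformity over $U$---so it does \emph{not} follow that the union of iterates of $U$ lies in a bounded set $W$, nor that it has finite volume. For a concrete failure, take the Hamiltonian flow of $H(x,y)=y^2/2+x^2/(1+x^2)$ on the invariant open region $\{H<1\}$: it preserves area, every orbit is a bounded closed level curve, yet the set swept by the orbits of any open $U$ whose closure touches the level $\{H=1\}$ contains a region $\{1-\delta<H<1\}$ of \emph{infinite} Lebesgue area, since the level curves' extent blows up as $H\to 1$. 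The repair is the standard exhaustion: let $W_k$ be the set of points whose full orbit lies in the ball of radius $k$; each $W_k$ is flow-invariant, measurable, and of finite volume, and $U=\bigcup_k (U\cap W_k)$ precisely because every orbit is bounded, so you run your lemma on each $(W_k, T|_{W_k})$ and take a countable union of null sets. This repair also cures a second, smaller inconsistency in your write-up: your disjoint sets $T^{-n}(N)$ are \emph{backward} iterates of subsets of $U$, so they need not be contained in your $W$, which you defined via \emph{forward} iterates; either saturate under the full orbit (as $W_k$ does) or run the disjointness argument with the forward images $T^{n}(N)$, using that the time-$\tau$ map of a flow is invertible with volume-preserving inverse. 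With these two fixes your argument is complete; without them, the pigeonhole has no finite-measure set to act on.
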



\section{MMWU in Quantum Zero-Sum Games}
\label{sec:mwu}

In \cite{jain2009parallel}, the MMWU algorithm for zero-sum games is shown to exhibit time-average convergence to an approximate Nash equilibrium in two-player quantum zero-sum games. 
The MMWU algorithm is shown in Algorithm \ref{alg:quantum}.

\begin{algorithm}[tb]
   \caption{Parallel Approximation of Equilibrium Point\label{alg:quantum}}
\begin{algorithmic}
   \STATE Let $\mu=\epsilon/8$ and let $N=\ceil{64\log{(nm)}/\epsilon^2}$.
   \STATE {\bfseries Initialize}:  $A_0 = \mathbb{1}_\mathcal{A}$, $\rho_0 = A_0/\mathrm{Tr}(A_0)$, $B_0 = \mathbb{1}_\mathcal{B}$, and $\sigma_0 = B_0/\mathrm{Tr}(B_0)$.
   \FOR{$j=1\dots N$}
   \STATE {$A_j = \exp\left(\mu \sum_{i=0}^{j-1}\Phi(\sigma_i)\right)$}
   \STATE {$\rho_j = A_j/\mathrm{Tr}(A_j)$}
   \STATE {$B_j = \exp\left(-\mu \sum_{i=0}^{j-1}\Phi^*(\rho_i)\right)$}
   \STATE {$\sigma_j = B_j/\mathrm{Tr}(B_j)$}
   \ENDFOR
\end{algorithmic}
\end{algorithm}


Note that here we focus specifically on two-player games, and utilize the expected payoffs defined via super-operators $\Phi : \mathrm{L}(\mathcal{A}) \to \mathrm{L}(\mathcal{B})$ as seen in Equation \ref{eqn:superoperator}. Moreover, $\mu$ is the step-size in the quantum algorithm.


In this section, we examine closely the update steps for each player in the MMWU algorithm (Algorithm \ref{alg:quantum}) and analyze the limiting behaviour of the total quantum relative entropy in the system. First, we introduce two useful facts which will aid in the analysis.

\begin{fact}[Golden-Thompson inequality \cite{goldeninequality, thompsoninequality}]\label{fact:golden}
Let $A,B$ be Hermitian matrices. Then
\begin{equation}
    \Tr \exp(A+B) \le \Tr\exp(A)\exp(B)
\end{equation}
\end{fact}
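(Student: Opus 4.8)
The plan is to reduce the statement, via the Lie--Trotter product formula, to a purely algebraic trace inequality for positive semi-definite matrices, and then to establish that inequality by an argument whose heart is the Cauchy--Schwarz inequality for the Hilbert--Schmidt inner product $\langle X,Y\rangle=\Tr(X^\dagger Y)$ together with a majorization estimate. Concretely, I would recall the Lie--Trotter formula
\[
\exp(A+B)=\lim_{k\to\infty}\left(\exp(A/2^k)\exp(B/2^k)\right)^{2^k},
\]
so that by continuity of the trace $\Tr\exp(A+B)=\lim_{k\to\infty}\Tr\left[(\exp(A/2^k)\exp(B/2^k))^{2^k}\right]$. Writing $P=\exp(A/2^k)$ and $Q=\exp(B/2^k)$, both of which are positive definite since $A,B$ are Hermitian, it suffices to prove the trace-power inequality
\[
\Tr\left[(PQ)^{2^k}\right]\le\Tr\left[P^{2^k}Q^{2^k}\right]
\]
for positive semi-definite $P,Q$; applying it to these $P,Q$ gives $\Tr[(PQ)^{2^k}]\le\Tr[P^{2^k}Q^{2^k}]=\Tr(\exp(A)\exp(B))$, and letting $k\to\infty$ yields the claim.

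The engine of the trace-power inequality is its base case $\Tr[(PQ)^2]\le\Tr[P^2Q^2]$, which I would prove directly from Cauchy--Schwarz. Since $P,Q$ are Hermitian we have $(QP)^\dagger=PQ$, hence
\[
\Tr[(PQ)^2]=\Tr\!\left[(QP)^\dagger (PQ)\right]=\langle QP,\,PQ\rangle,
\]
a real number, and Cauchy--Schwarz gives $\langle QP,PQ\rangle\le\|QP\|_{HS}\,\|PQ\|_{HS}$. A cyclic rearrangement of the trace shows $\|PQ\|_{HS}^2=\Tr[(PQ)^\dagger PQ]=\Tr[QP^2Q]=\Tr[P^2Q^2]$ and likewise $\|QP\|_{HS}^2=\Tr[P^2Q^2]$, so the right-hand side equals $\Tr[P^2Q^2]$, establishing the base case.

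To reach general powers of two I would exploit the similarity $(PQ)^m=P^{1/2}(P^{1/2}QP^{1/2})^mP^{-1/2}$ (valid for $P$ positive definite, extended to the semi-definite case by continuity). This shows that $PQ$ has the same nonnegative eigenvalues $\lambda_i$ as the positive semi-definite matrix $P^{1/2}QP^{1/2}$, so that $\Tr[(PQ)^{2N}]=\sum_i\lambda_i^{2N}$, while a parallel computation identifies $\Tr[(P^2Q^2)^N]=\sum_i s_i^{2N}$, where the $s_i$ are the singular values of $PQ$. The single reduction step $\Tr[(PQ)^{2N}]\le\Tr[(P^2Q^2)^N]$ is then exactly the statement that the eigenvalue moduli of $PQ$ are majorized by its singular values (Weyl's majorant theorem, of which the base case above is the instance $N=1$). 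Chaining this step with $P\mapsto P^{2^j}$, $Q\mapsto Q^{2^j}$ telescopes $\Tr[(PQ)^{2^k}]$ down through $\Tr[(P^2Q^2)^{2^{k-1}}],\ldots$ to $\Tr[P^{2^k}Q^{2^k}]$, completing the proof after the limit in $k$.

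The main obstacle is precisely this passage from the base case to higher powers: Cauchy--Schwarz disposes of the squared product immediately, but bootstrapping it must contend with the fact that $PQ$ is not normal, so its eigenvalues and singular values differ and the trace of a high power is not controlled by a naive iteration of the inner-product estimate. Overcoming this is where the majorization (Weyl-type) comparison of eigenvalues and singular values enters; once that lemma is available, the Lie--Trotter reduction and the final limiting argument are routine.
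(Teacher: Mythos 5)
The paper does not prove this statement at all: Golden--Thompson is stated as Fact~\ref{fact:golden} and attributed to the original papers of Golden and Thompson, so there is no internal proof to compare yours against. Your argument is the standard textbook proof of the inequality, and it is correct as outlined: Lie--Trotter reduces the claim to the trace-power inequality $\Tr[(PQ)^{2^k}]\le\Tr[P^{2^k}Q^{2^k}]$ for positive semi-definite $P,Q$; your Cauchy--Schwarz computation correctly settles the base case $\Tr[(PQ)^2]\le\Tr[P^2Q^2]$; and the similarity $(PQ)^m=P^{1/2}(P^{1/2}QP^{1/2})^mP^{-1/2}$ correctly identifies the eigenvalues of $PQ$ as nonnegative, so that the reduction step $\Tr[(XY)^{2N}]\le\Tr[(X^2Y^2)^N]$ is indeed an instance of Weyl's majorant theorem, after which the telescoping chain and the $k\to\infty$ limit are routine. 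Two things are worth flagging. First, you correctly diagnose the one genuinely non-elementary ingredient: Cauchy--Schwarz alone does not bootstrap from the square to higher powers, because $PQ$ is not normal; the full Weyl majorant theorem (eigenvalue moduli weakly majorized by singular values in every $\ell^p$ sense) is doing the heavy lifting for $N>1$, and your proof is complete only modulo citing that theorem and Lie--Trotter, both of which are standard and legitimately citable. Second, within the paper's economy this level of detail is unnecessary --- the fact is used only as a black box in the proof of Corollary~\ref{corollary:inequality} --- but your reconstruction is faithful to how the inequality is actually established in the literature.
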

\begin{fact}\label{fact:exp}
Let $0 \leq A \leq \id$ be a PSD matrix and $\delta$ be a real number. Then,
\begin{equation}
    \exp(\delta A) \le \id + \delta \exp(\delta) A
\end{equation} 

\end{fact}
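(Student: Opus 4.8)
The plan is to reduce the claimed operator inequality to a one-variable inequality over the eigenvalues of $A$, and then to dispatch that scalar inequality using convexity of the exponential. Since $A$ is Hermitian and PSD with $A \le \id$, all of its eigenvalues lie in $[0,1]$; writing a spectral decomposition $A = U\Lambda U^\dagger$ with $\Lambda = \mathrm{diag}(\lambda_1,\dots,\lambda_n)$, both $\exp(\delta A)$ and $\id + \delta\exp(\delta)A$ are functions of $A$ realized in the same eigenbasis $U$. Consequently the difference $(\id + \delta\exp(\delta)A) - \exp(\delta A) = U\,\mathrm{diag}\big(1 + \delta e^{\delta}\lambda_i - e^{\delta \lambda_i}\big)\,U^\dagger$ is PSD if and only if each diagonal entry is nonnegative. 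Thus it suffices to prove the scalar inequality
\begin{equation*}
    e^{\delta \lambda} \le 1 + \delta e^{\delta}\lambda \qquad \text{for all } \lambda \in [0,1].
\end{equation*}

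For the scalar bound I would split the argument into two elementary steps. First, the map $\lambda \mapsto e^{\delta\lambda}$ is convex on $[0,1]$ (its second derivative $\delta^2 e^{\delta\lambda}$ is nonnegative for every real $\delta$), so it lies below the chord joining its endpoints $(0,1)$ and $(1,e^{\delta})$; this yields $e^{\delta\lambda} \le 1 + (e^{\delta}-1)\lambda$. Second, since $\lambda \ge 0$, it remains only to compare the slopes, i.e.\ to show $e^{\delta}-1 \le \delta e^{\delta}$, equivalently $e^{\delta}(1-\delta)\le 1$. The auxiliary function $g(\delta) = e^{\delta}(1-\delta)$ satisfies $g(0)=1$ and $g'(\delta) = -\delta e^{\delta}$, which is positive for $\delta<0$ and negative for $\delta>0$; hence $g$ attains its global maximum $1$ at $\delta=0$, so $e^{\delta}(1-\delta)\le 1$ for every real $\delta$. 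Chaining the two bounds gives $e^{\delta\lambda}\le 1+(e^{\delta}-1)\lambda \le 1+\delta e^{\delta}\lambda$, as required.

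I expect the only point needing care to be the first reduction: one must note that because both sides are functions of the single Hermitian operator $A$, they are simultaneously diagonalizable, so that pointwise dominance of the associated scalar functions on the spectrum $\mathrm{spec}(A)\subseteq[0,1]$ is exactly equivalent to the Loewner-order inequality. This is precisely where the hypothesis $0 \le A \le \id$ enters, guaranteeing $\mathrm{spec}(A)\subseteq[0,1]$ so that both the convexity chord bound and the requirement $\lambda \ge 0$ are available. The remaining scalar estimates are routine one-variable calculus, and since both the convexity step and the slope comparison hold for arbitrary real $\delta$, no sign restriction on $\delta$ is needed.
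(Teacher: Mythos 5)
Your proof is correct. Note that the paper itself never proves Fact~\ref{fact:exp}: it is stated as a known auxiliary fact (standard in the matrix multiplicative weights literature, e.g.\ Arora--Kale) and no proof appears in the main text or in Appendix~\ref{appsecs:proofs_mwu}, so there is no in-paper argument to compare against. Your argument is exactly the standard one: since both sides are functions of the single Hermitian operator $A$, the Loewner inequality reduces to the scalar bound $e^{\delta\lambda}\le 1+\delta e^{\delta}\lambda$ on $\mathrm{spec}(A)\subseteq[0,1]$, which you dispatch by the chord bound $e^{\delta\lambda}\le 1+(e^{\delta}-1)\lambda$ from convexity together with $e^{\delta}-1\le\delta e^{\delta}$ for all real $\delta$; both steps are sound, the edge case $\delta\le 0$ is handled, and the reduction correctly identifies where the hypothesis $0\le A\le\id$ is used.
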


Next, we put forward two corollaries which will help us prove Theorem~\ref{thm:qre_mwu}. 
 We first define $\Delta S(\rho^* \| \rho_j) = S(\rho^* \| \rho_j) - S(\rho^*\| \rho_{j-1})$ and $\Delta S(\sigma^* \| \sigma_j) = S(\sigma^* \| \sigma_j) - S(\sigma^*\| \sigma_{j-1})$.


\begin{corollary}\label{corollary:equality}
    The change in the sum of quantum relative entropies in a quantum zero-sum game between a fully-mixed Nash equilibrium and the players' strategies is given by:
\begin{align}
    \Delta S(\rho^* \| \rho_j) + \Delta S(\sigma^* \| \sigma_j)= \log{\frac{\mathrm{Tr}A_j}{\mathrm{Tr}A_{j-1}}}+\log{\frac{\mathrm{Tr}B_j}{\mathrm{Tr}B_{j-1}}}
\end{align}
\end{corollary}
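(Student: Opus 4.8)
The plan is to expand each relative-entropy difference directly using the closed form of the MMWU iterates, and then to exploit the defining property of a fully mixed equilibrium to cancel the resulting cross terms. First I would write $\rho_j = A_j/\Tr A_j$, so that $\log \rho_j = \log A_j - (\log \Tr A_j)\,\mathbb{1}_{\mathcal{A}}$, and substitute $\log A_j = \mu\sum_{i=0}^{j-1}\Phi(\sigma_i)$ from the update rule. Using $\Tr(\rho^*) = 1$, the definition $S(\rho^*\|\rho_j) = \Tr(\rho^*\log\rho^*) - \Tr(\rho^*\log\rho_j)$ becomes
\begin{equation*}
S(\rho^*\|\rho_j) = \Tr(\rho^*\log\rho^*) - \mu\sum_{i=0}^{j-1}\langle \rho^*, \Phi(\sigma_i)\rangle + \log \Tr A_j .
\end{equation*}

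Next I would form the one-step difference. The constant entropy term $\Tr(\rho^*\log\rho^*)$ cancels and the sum telescopes, leaving
\begin{equation*}
\Delta S(\rho^*\|\rho_j) = -\mu\,\langle \rho^*, \Phi(\sigma_{j-1})\rangle + \log\frac{\Tr A_j}{\Tr A_{j-1}} .
\end{equation*}
An identical computation for Bob, using $\log B_j = -\mu\sum_{i=0}^{j-1}\Phi^*(\rho_i)$, gives
\begin{equation*}
\Delta S(\sigma^*\|\sigma_j) = \mu\,\langle \sigma^*, \Phi^*(\rho_{j-1})\rangle + \log\frac{\Tr B_j}{\Tr B_{j-1}} .
\end{equation*}
Adding the two expressions, the logarithmic terms already match the claimed right-hand side, so it remains to show that the two cross terms cancel, i.e. that $\langle \sigma^*, \Phi^*(\rho_{j-1})\rangle = \langle \rho^*, \Phi(\sigma_{j-1})\rangle$, equivalently $u(\rho_{j-1},\sigma^*) = u(\rho^*,\sigma_{j-1})$.

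This cancellation is the crux of the argument and the step I expect to be the main obstacle. The key observation is the indifference (equalizer) property of a \emph{fully mixed} Nash equilibrium: since $u(\rho,\sigma^*) = \langle\rho,\Phi(\sigma^*)\rangle$ is linear in $\rho$, its maximum over density matrices is attained only on the top eigenspace of $\Phi(\sigma^*)$, so for the full-rank state $\rho^*$ to be optimal that eigenspace must be all of $\mathcal{A}$, forcing $\Phi(\sigma^*) = c_A\mathbb{1}_{\mathcal{A}}$. The symmetric argument on Bob's side gives $\Phi^*(\rho^*) = c_B\mathbb{1}_{\mathcal{B}}$, and evaluating the equilibrium payoff in both ways shows $c_A = u(\rho^*,\sigma^*) = c_B$. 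Finally, using the adjoint relation $\langle \rho^*,\Phi(\sigma_{j-1})\rangle = \langle \Phi^*(\rho^*),\sigma_{j-1}\rangle$ together with these identities and $\Tr\sigma_{j-1} = \Tr\rho_{j-1} = 1$, I would compute $u(\rho^*,\sigma_{j-1}) = c_B$ and $u(\rho_{j-1},\sigma^*) = c_A$, so the two cross terms are equal and cancel, completing the proof.
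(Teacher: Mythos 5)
Your proposal is correct and follows essentially the same route as the paper's proof: expand the relative-entropy differences using the closed form of the MMWU iterates so the sums telescope, then cancel the two cross terms using the fully-mixed Nash equilibrium. The only difference is that you make explicit (and prove) the equalizer property $\Phi(\sigma^*) = c\,\mathbb{1}$, $\Phi^\dagger(\rho^*) = c\,\mathbb{1}$, whereas the paper invokes it implicitly when it replaces $\mathrm{Tr}(\rho^*\Phi(\sigma_{j-1}))$ by $\mathrm{Tr}(\rho^*\Phi(\sigma^*))$ and then cancels via the adjoint relation $\langle\rho^*,\Phi(\sigma^*)\rangle = \langle\Phi^\dagger(\rho^*),\sigma^*\rangle$.
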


\begin{corollary}\label{corollary:inequality}
    The following trace inequalities hold for PSD matrices $A$ and $B$ updated with MMWU:
    
    \begin{align}
        \Delta S(\rho^* \| \rho_j) + \Delta S(\sigma^* \| \sigma_j) &\geq \mu\exp(-\mu) \Tr \left(\rho_{j} \Phi(\sigma_{j-1})\right) - \mu\exp(\mu) \Tr \left(\rho_{j-1} \Phi(\sigma_{j})\right)\label{eqn:lowerboundsum}\\
        \Delta S(\rho^* \| \rho_j) + \Delta S(\sigma^* \| \sigma_j) &\leq \mu\exp(\mu) \Tr \left(\rho_{j-1} \Phi(\sigma_{j-1}) \right) -\mu\exp(-\mu) \Tr \left(\rho_{j-1} \Phi(\sigma_{j-1})\right) \label{eqn:upperboundsum}
    \end{align}
\end{corollary}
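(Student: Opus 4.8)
The plan is to reduce everything to Corollary~\ref{corollary:equality}, which already equates the entropy change with the sum of log-ratios $\log\frac{\Tr A_j}{\Tr A_{j-1}} + \log\frac{\Tr B_j}{\Tr B_{j-1}}$; it therefore suffices to bound each log-ratio from above and below. The engine for all four bounds is the Golden--Thompson inequality (Fact~\ref{fact:golden}) together with the linearization of the exponential in Fact~\ref{fact:exp}. Throughout I will assume the payoff observable is normalized so that $0 \le \Phi(\sigma) \le \id$ and $0 \le \Phi^*(\rho) \le \id$ for all density matrices, which is precisely what lets me invoke Fact~\ref{fact:exp}. Two further elementary ingredients are the scalar bounds $\log(1+x)\le x$ and $\log(1-x)\le -x$, and the adjoint identity $\Tr(\sigma\,\Phi^*(\rho)) = \Tr(\rho\,\Phi(\sigma))$, which follows from the defining property $\langle \rho, \Phi(\sigma)\rangle = \langle \Phi^*(\rho), \sigma\rangle$ of the adjoint super-operator together with the Hermiticity of density matrices.

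For the upper bound~\eqref{eqn:upperboundsum}, I would write $A_j = \exp\!\big(\mu\sum_{i=0}^{j-2}\Phi(\sigma_i) + \mu\Phi(\sigma_{j-1})\big)$, apply Golden--Thompson to peel off the last summand, and then use Fact~\ref{fact:exp} with $\delta = \mu$ to replace $\exp(\mu\Phi(\sigma_{j-1}))$ by $\id + \mu e^{\mu}\Phi(\sigma_{j-1})$. Taking traces and using $A_{j-1} = \Tr(A_{j-1})\,\rho_{j-1}$ gives $\frac{\Tr A_j}{\Tr A_{j-1}} \le 1 + \mu e^{\mu}\Tr(\rho_{j-1}\Phi(\sigma_{j-1}))$, whence $\log\frac{\Tr A_j}{\Tr A_{j-1}} \le \mu e^{\mu}\Tr(\rho_{j-1}\Phi(\sigma_{j-1}))$ by $\log(1+x)\le x$. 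The same computation for $B_j$, now with $\delta = -\mu$ (the negative case of Fact~\ref{fact:exp}) and the adjoint identity to rewrite $\Tr(\sigma_{j-1}\Phi^*(\rho_{j-1}))$ as $\Tr(\rho_{j-1}\Phi(\sigma_{j-1}))$, yields $\log\frac{\Tr B_j}{\Tr B_{j-1}} \le -\mu e^{-\mu}\Tr(\rho_{j-1}\Phi(\sigma_{j-1}))$. Summing the two log-ratios and invoking Corollary~\ref{corollary:equality} produces exactly~\eqref{eqn:upperboundsum}.

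For the lower bound~\eqref{eqn:lowerboundsum}, the trick is to apply Golden--Thompson in the reverse direction: I would express the earlier iterate as a negative perturbation of the later one, $A_{j-1} = \exp\!\big(\mu\sum_{i=0}^{j-1}\Phi(\sigma_i) - \mu\Phi(\sigma_{j-1})\big)$, and bound $\Tr A_{j-1}$ from above via Golden--Thompson and Fact~\ref{fact:exp} with $\delta = -\mu$. Using $A_j = \Tr(A_j)\rho_j$ this gives $\frac{\Tr A_{j-1}}{\Tr A_j} \le 1 - \mu e^{-\mu}\Tr(\rho_j\Phi(\sigma_{j-1}))$, so that $\log\frac{\Tr A_j}{\Tr A_{j-1}} \ge \mu e^{-\mu}\Tr(\rho_j\Phi(\sigma_{j-1}))$ after applying $\log(1-x)\le -x$ and negating. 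The symmetric manipulation for $B$, writing $B_{j-1}$ as a positive perturbation of $B_j$ and using the adjoint identity, gives $\log\frac{\Tr B_j}{\Tr B_{j-1}} \ge -\mu e^{\mu}\Tr(\rho_{j-1}\Phi(\sigma_j))$. Adding and applying Corollary~\ref{corollary:equality} once more yields~\eqref{eqn:lowerboundsum}.

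I expect the main obstacle to be bookkeeping the directions rather than any deep inequality: for each of the four bounds one must correctly decide whether to split the current iterate or the previous one, and whether to linearize with $\delta = +\mu$ or $\delta = -\mu$, since the wrong pairing produces a bound in the wrong direction. The normalization assumption $0 \le \Phi(\sigma),\,\Phi^*(\rho) \le \id$ is essential, being the sole hypothesis under which Fact~\ref{fact:exp} applies; checking the negative-$\delta$ case of Fact~\ref{fact:exp} reduces to the convexity estimate $e^{\mu}-1\ge \mu$ and is routine. Finally, the adjoint identity is what allows the cross terms $\Tr(\sigma\,\Phi^*(\rho))$ arising from the $B$-updates to be re-expressed through $\Phi$, so that the two halves combine into the stated symmetric form.
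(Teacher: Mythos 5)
Your proposal is correct and takes essentially the same route as the paper's own proof: both reduce the claim to Corollary~\ref{corollary:equality} and bound each log-ratio by applying the Golden--Thompson inequality (Fact~\ref{fact:golden}) and the linearization in Fact~\ref{fact:exp}, with exactly the same choices of which iterate to split and which sign of $\delta$, finishing with the scalar bound $1+x\le \exp(x)$ (equivalently, your $\log(1+x)\le x$). The only differences are cosmetic: you treat the two bounds in the opposite order, and you make explicit the normalization $0\le\Phi(\sigma),\Phi^*(\rho)\le\id$ and the adjoint identity $\Tr(\sigma\,\Phi^*(\rho))=\Tr(\rho\,\Phi(\sigma))$, both of which the paper uses implicitly.
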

The proof of Corollary~\ref{corollary:inequality} relies on Facts \ref{fact:golden} and \ref{fact:exp}, as well as the equality shown in Corollary~\ref{corollary:equality}.

In many practical scenarios, one would use decreasing step-sizes when running MMWU. As such, we utilize Corollary~\ref{corollary:inequality} and take the limit as step-size $\mu$ goes to $0$ in order to show the following theorem:

\begin{theorem}\label{thm:qre_mwu}
    The change in the sum of quantum relative entropies between a fully mixed Nash equilibrium and the player's strategies and in a two-player zero-sum quantum game tends to zero when step-size $\mu\to 0$. Specifically,
    \begin{equation}
        \lim_{\mu \to 0}\frac{1}{\mu} \left(\Delta S(\rho^* \| \rho_j) + \Delta S(\sigma^* \| \sigma_j)\right) = 0 
    \end{equation}
\end{theorem}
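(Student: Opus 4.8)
The plan is to close the statement by a squeeze (sandwich) argument built directly on the two-sided bound of Corollary~\ref{corollary:inequality}. Since $\mu > 0$, dividing both inequalities \eqref{eqn:lowerboundsum} and \eqref{eqn:upperboundsum} by $\mu$ preserves their direction and gives
\[
    L(\mu) \;\le\; \frac{1}{\mu}\bigl(\Delta S(\rho^* \| \rho_j) + \Delta S(\sigma^* \| \sigma_j)\bigr) \;\le\; U(\mu),
\]
where $L(\mu)$ and $U(\mu)$ denote the right-hand sides of \eqref{eqn:lowerboundsum} and \eqref{eqn:upperboundsum} respectively, each divided by $\mu$. It therefore suffices to show that both $L(\mu)\to 0$ and $U(\mu)\to 0$ as $\mu\to 0$, after which the limit of the middle quantity is forced to be $0$.

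The upper bound is immediate: $U(\mu) = \bigl(\exp(\mu)-\exp(-\mu)\bigr)\Tr\!\bigl(\rho_{j-1}\Phi(\sigma_{j-1})\bigr)$, where the prefactor $\exp(\mu)-\exp(-\mu)\to 0$ while the trace term stays bounded, because density matrices live in the compact set $D(\mathcal{H})$ and $\Phi$ is a fixed bounded linear super-operator. Hence $U(\mu)\to 0$.

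For the lower bound $L(\mu) = \exp(-\mu)\Tr\!\bigl(\rho_j\Phi(\sigma_{j-1})\bigr) - \exp(\mu)\Tr\!\bigl(\rho_{j-1}\Phi(\sigma_j)\bigr)$, the subtlety is that the iterates $\rho_j,\sigma_j$ themselves depend on $\mu$ through the MMWU updates of Algorithm~\ref{alg:quantum}. The key observation is that, for a fixed index $j$, as $\mu\to 0$ the exponent $\mu\sum_{i=0}^{j-1}\Phi(\sigma_i)\to 0$, so $A_j\to\mathbb{1}_\mathcal{A}$ and thus $\rho_j\to\rho_0$; in particular the consecutive differences $\rho_j-\rho_{j-1}$ and $\sigma_j-\sigma_{j-1}$ vanish in the limit. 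I would then decompose $L(\mu)$ into (i) prefactor errors $[\exp(\mp\mu)-1]$ multiplying bounded traces, which vanish since $\exp(\pm\mu)\to 1$, and (ii) the difference $\Tr\!\bigl(\rho_j\Phi(\sigma_{j-1})\bigr)-\Tr\!\bigl(\rho_{j-1}\Phi(\sigma_j)\bigr)$, which vanishes by continuity of the trace together with $\rho_j-\rho_{j-1}\to 0$ and $\sigma_j-\sigma_{j-1}\to 0$. Both pieces tending to $0$ yields $L(\mu)\to 0$, and the squeeze completes the proof.

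I expect the only genuine obstacle to be bookkeeping around the $\mu$-dependence of the strategies: one cannot treat $\rho_j,\sigma_j$ as constants while letting $\mu\to 0$, so the crux is establishing the joint continuity $\rho_j,\sigma_j\to\rho_0,\sigma_0$ (equivalently, the vanishing of the consecutive-iterate differences) and combining it with the uniform boundedness of the trace terms. Everything else is an elementary limit computation.
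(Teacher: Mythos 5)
Your proposal is correct and takes essentially the same approach as the paper: both arguments squeeze $\frac{1}{\mu}\left(\Delta S(\rho^* \| \rho_j) + \Delta S(\sigma^* \| \sigma_j)\right)$ between the two bounds of Corollary~\ref{corollary:inequality} and show that each bound vanishes after division by $\mu$, the key ingredient in both cases being that the MMWU iterates move by a vanishing amount as $\mu \to 0$. The only difference is one of bookkeeping: the paper quantifies the consecutive-iterate drift as $O(\mu)$ and Taylor-expands to conclude both bounds are $O(\mu^2)$, while your qualitative limit argument (for fixed $j$ the exponent $\mu\sum_{i=0}^{j-1}\Phi(\sigma_i)$ tends to $0$, so $\rho_j \to \rho_0$ and the consecutive differences vanish) reaches the same conclusion without tracking explicit rates.
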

The proof of Theorem~\ref{thm:qre_mwu}, along with Corollaries \ref{corollary:equality} and \ref{corollary:inequality} can be found in Appendix \ref{appsecs:proofs_mwu}.

Theorem \ref{thm:qre_mwu} further motivates an investigation into the continuous time variant of MMWU, which we call quantum/matrix replicator dynamics. In particular, we show that in the continuous case, the sum of quantum relative entropies is invariant. 
We introduce and study quantum replicator dynamics in detail in Section \ref{sec:replicator}.


\section{Replicator Dynamics in Quantum Zero-Sum Games}
\label{sec:replicator}
We have seen that in the case of discrete dynamics (MMWU), as the step-size becomes infinitesimal, the total quantum relative entropy between the Nash equilibirum and the system state tends to  stabilize.
A natural question would then be: does the same result hold in continuous time?  In order to explore this question in greater detail, we first need to define the continuous analogue of MMWU, the \emph{quantum replicator dynamics}, which has well known classical analogues \cite{sandholm2010population}.

We start by rewriting the MMWU update steps, but now defined over a continuous time interval $[0,t]$:

\begin{eqnarray}
A(t) &=& \int_0^t \Phi(\sigma(\tau))d\tau \label{eqn:A_t}\\ 
\rho(t) &=& \exp(A(t))/\mathrm{Tr}(\exp(A(t))) \label{eqn:rho_t}\\
B(t) &=& - \int_0^t \Phi^\dagger(\rho(\tau))d\tau \label{eqn:B_t}\\
 \sigma(t) &=& \exp(B(t))/\mathrm{Tr}(\exp(B(t))) \label{eqn:sigma_t}
\end{eqnarray}

Note here that we shift the exponential terms from the definition of $A(t)$ and $B(t)$ to the corresponding $\rho(t)$ and $\sigma(t)$ terms. This will help simplify some of the proof techniques later on in the paper. Furthermore, in the rest of the paper we will typically drop from the notation the explicit dependence on $t$ to ease with the notational burden.

It is important to note the following observation, which will be helpful in our later analysis.
\begin{observation}\label{obs:equivalence}
The discrete-time trajectories $\rho_j$ and $\sigma_j$ defined in Algorithm \ref{alg:quantum} are a standard Euler discretization (with step $\mu$) of the continuous-time trajectories $\rho(t)$ and $\sigma(t)$ defined in Equations \ref{eqn:rho_t} and  \ref{eqn:sigma_t}.
\end{observation}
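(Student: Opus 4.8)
The plan is to recognize that the object actually undergoing Euler discretization is not the state $\rho(t)$ itself but the \emph{cumulative payoff} matrix $A(t)$ (respectively $B(t)$), with the Gibbs map $X \mapsto \exp(X)/\Tr(\exp(X))$ applied identically in both regimes. This is exactly the content of the ``shift'' of the exponential noted after Equations \ref{eqn:A_t}--\ref{eqn:sigma_t}: in the continuous formulation the exponential lives in the state map, whereas Algorithm \ref{alg:quantum} folds it into the matrices $A_j, B_j$. Making this correspondence explicit is the crux, and once it is set up the rest is unwinding definitions.

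First I would apply the fundamental theorem of calculus to Equations \ref{eqn:A_t} and \ref{eqn:B_t} to rewrite the continuous dynamics as the coupled system $\dot{A}(t) = \Phi(\sigma(t))$ and $\dot{B}(t) = -\Phi^\dagger(\rho(t))$, subject to the initial conditions $A(0) = 0$ and $B(0) = 0$, where $\rho(t)$ and $\sigma(t)$ are recovered from $A(t)$ and $B(t)$ through the Gibbs map of Equations \ref{eqn:rho_t} and \ref{eqn:sigma_t}. Since the right-hand sides depend on the state only through $\rho$ and $\sigma$, the explicit (forward) Euler scheme with uniform step $\mu$ at nodes $t = j\mu$ reads $A_j^{\text{cum}} = A_{j-1}^{\text{cum}} + \mu\,\Phi(\sigma_{j-1})$ and $B_j^{\text{cum}} = B_{j-1}^{\text{cum}} - \mu\,\Phi^\dagger(\rho_{j-1})$, with $\rho_{j-1} = \exp(A_{j-1}^{\text{cum}})/\Tr(\exp(A_{j-1}^{\text{cum}}))$ and $\sigma_{j-1}$ defined analogously.

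Next I would unroll these recursions from the initial values $A_0^{\text{cum}} = B_0^{\text{cum}} = 0$, which yields the closed forms $A_j^{\text{cum}} = \mu\sum_{i=0}^{j-1}\Phi(\sigma_i)$ and $B_j^{\text{cum}} = -\mu\sum_{i=0}^{j-1}\Phi^\dagger(\rho_i)$, and match them against Algorithm \ref{alg:quantum} (using $\Phi^* = \Phi^\dagger$ for the adjoint super-operator). The algorithm's matrices satisfy $A_j = \exp(A_j^{\text{cum}})$ and $B_j = \exp(B_j^{\text{cum}})$, so the normalizations $\rho_j = A_j/\Tr(A_j)$ and $\sigma_j = B_j/\Tr(B_j)$ coincide exactly with the Gibbs map applied to the Euler-iterated cumulative payoffs. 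The initialization $A_0 = \mathbb{1}_\mathcal{A}$, $B_0 = \mathbb{1}_\mathcal{B}$ corresponds to $A_0^{\text{cum}} = B_0^{\text{cum}} = 0$ and hence to the maximally mixed states $\rho(0), \sigma(0)$, so the discrete and continuous orbits agree at $t=0$ and the iterates $(\rho_j, \sigma_j)$ are precisely the Euler trajectory sampled at $t = j\mu$.

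I expect the only genuine subtlety—worth stating carefully rather than being a true obstacle—to be the bookkeeping around the exponential. The linear Euler step acts on the dual (cumulative payoff) variables $A, B$, while the nonlinear softmax is a fixed reparametrization shared by both the discrete and continuous dynamics; consequently the $\exp$ introduces no discretization error of its own, and the claim that $\rho_j$ is the Euler discretization of $\rho(t)$ is to be read in this standard FTRL sense, namely as the mirror image of the Euler-discretized dual iterate. Everything else is a direct verification by unrolling the recursion.
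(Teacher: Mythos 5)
Your proof is correct, and it takes a noticeably cleaner route than the paper's own argument. The paper works in the integral picture: it rewrites $A(t)=\int_0^t\Phi(\sigma(\tau))d\tau$ as a Riemann sum $\sum_{i=0}^{j-1}(t/j)\,\Phi(\sigma_i)$ over $j$ intervals of width $\mu=t/j$, matches this sum against the exponent $\mu\sum_{i=0}^{j-1}\Phi(\sigma_i)$ in Algorithm \ref{alg:quantum}, and concludes that the discrete and continuous trajectories coincide ``at the limit'' $\mu\to 0$. As written, that argument implicitly treats the discrete iterates $\sigma_i$ as sample values of the continuous trajectory $\sigma(\tau)$, so it is really a limit/convergence statement rather than an exact identification of the scheme. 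You instead differentiate Equations \ref{eqn:A_t} and \ref{eqn:B_t} to obtain the dual ODE system $\dot{A}=\Phi(\sigma)$, $\dot{B}=-\Phi^\dagger(\rho)$ with $A(0)=B(0)=0$, apply forward Euler with step $\mu$ to these dual variables, and unroll the coupled recursion to recover exactly the exponents appearing in Algorithm \ref{alg:quantum}, with the Gibbs map $X\mapsto\exp(X)/\Tr(\exp(X))$ acting as a fixed reparametrization shared by both regimes (and the initialization $A_0=\id$ matching $\exp(0)$). This buys precision: it shows the algorithm literally \emph{is} the Euler discretization of the dual dynamics --- which is what Observation \ref{obs:equivalence} asserts --- with no appeal to a limit, and it isolates exactly where the discretization lives (the linear update of $A,B$) and why the exponential map contributes no additional error. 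Both proofs ultimately hinge on the same matching of $\mu\sum_{i=0}^{j-1}\Phi(\sigma_i)$ against the integral, but your formalization in the dual (FTRL) coordinates is the more rigorous of the two; the only bookkeeping left implicit in your write-up is the joint induction on $j$ showing that the Euler iterates of $A^{\mathrm{cum}}$ and $B^{\mathrm{cum}}$ and the algorithm's $A_j,B_j$ generate the same coupled sequence of states, which is immediate.
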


We now define the \emph{quantum replicator dynamics} as:
\begin{equation}
d\rho/dt = \frac{d}{dt}\left(\frac{\exp(A)}{\mathrm{Tr}(\exp(A))}\right),\quad
d\sigma/dt = \frac{d}{dt}\left(\frac{\exp(B)}{\mathrm{Tr}(\exp(B))}\right) \label{eqn:replicator}
\end{equation}

It is worth noting that in the classical/commuting setting, one can write the replicator equations in a form that describes the relative utility that one agent obtains as compared to the average utility overall. However, in the quantum case this is not possible in general, since it relies on the assumption that $\int_0^t \Phi(\sigma(\tau))d\tau$ and $\Phi(\sigma(t))$, and respectively $\int_0^t \Phi^\dagger(\rho(\tau))d\tau$ and $\Phi^\dagger(\rho(t))$ commute. 

As a consequence of Observation \ref{obs:equivalence}, we can also conclude that the dynamical system described by the replicator dynamics defined in Equations \ref{eqn:replicator} is a limit case of the dynamical system described by the MMWU algorithm as $\mu \to 0$.

We are now able to state the main theorem for quantum relative entropy in quantum replicator dynamics.
\begin{theorem} \label{thm:qre_replicator}
When applying matrix/quantum replicator dynamics in a quantum zero-sum game with a fully-mixed Nash equilibrium $(\rho^*,\sigma^*)$,
the sum of quantum relative entropies
between the fully-mixed Nash equilibrium and the state of the system $(\rho(t),\sigma(t))$ is invariant on every system trajectory, i.e.:
\begin{equation}
\frac{d\big(S(\rho^* \| \rho(t))+S(\sigma^* \| \sigma(t))\big)}{dt} = 0
\end{equation}
\end{theorem}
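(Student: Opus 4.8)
The plan is to differentiate each of the two relative-entropy terms separately and show that, after summing, every surviving term cancels. Writing $S(\rho^* \| \rho(t)) = \Tr(\rho^*\log\rho^*) - \Tr(\rho^*\log\rho(t))$, the first summand is constant in $t$, so only the second matters. From Equation~\ref{eqn:rho_t} the Gibbs form gives $\log\rho(t) = A(t) - \log\big(\Tr\exp(A(t))\big)\,\id$, and since $\Tr(\rho^*)=1$ this reduces to $S(\rho^* \| \rho(t)) = \Tr(\rho^*\log\rho^*) - \Tr(\rho^* A(t)) + \log\Tr\exp(A(t))$.

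First I would differentiate this in $t$. The only nonroutine ingredient is the identity $\frac{d}{dt}\Tr\exp(A(t)) = \Tr\big(\exp(A(t))\dot{A}(t)\big)$, which holds even though $A(t)$ and $\dot A(t)$ need not commute: expanding the Fréchet derivative of the matrix exponential as $\int_0^1 \exp(sA)\dot A\exp((1-s)A)\,ds$ and applying cyclicity of the trace collapses the integral. Combined with $\dot A(t)=\Phi(\sigma(t))$ from Equation~\ref{eqn:A_t}, this yields $\frac{d}{dt}S(\rho^*\|\rho(t)) = \Tr\big((\rho(t)-\rho^*)\Phi(\sigma(t))\big)$, and symmetrically, using $\dot B(t)=-\Phi^\dagger(\rho(t))$ from Equation~\ref{eqn:B_t}, $\frac{d}{dt}S(\sigma^*\|\sigma(t)) = -\Tr\big((\sigma(t)-\sigma^*)\Phi^\dagger(\rho(t))\big)$.

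Summing and expanding, the two ``diagonal'' terms $\Tr(\rho(t)\Phi(\sigma(t)))$ and $\Tr(\sigma(t)\Phi^\dagger(\rho(t)))$ are equal by the defining adjoint relation $\langle A,\Phi(B)\rangle=\langle\Phi^\dagger(A),B\rangle$ (using Hermiticity of the density matrices and that $\Phi,\Phi^\dagger$ preserve Hermiticity), so they cancel. What remains is $-\Tr(\rho^*\Phi(\sigma(t)))+\Tr(\sigma^*\Phi^\dagger(\rho(t)))$, and to kill this I would invoke the equilibrium hypothesis. The crucial structural fact, which I expect to be the main obstacle, is the first-order optimality condition at a \emph{fully-mixed} (interior, full-rank) Nash equilibrium: since $u(\cdot,\sigma^*)$ and $u(\rho^*,\cdot)$ are linear functionals on the affine hull $\{X=X^\dagger : \Tr X=1\}$ of the density matrices, an interior extremum forces the gradient to be orthogonal (in the Hilbert-Schmidt inner product) to every traceless Hermitian direction, hence $\Phi(\sigma^*)=c_1\id$ and $\Phi^\dagger(\rho^*)=c_2\id$ for scalars $c_1,c_2$. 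Moving the adjoint across then gives $\Tr(\rho^*\Phi(\sigma(t)))=\Tr(\Phi^\dagger(\rho^*)\sigma(t))=c_2$ and $\Tr(\sigma^*\Phi^\dagger(\rho(t)))=\Tr(\rho(t)\Phi(\sigma^*))=c_1$, while evaluating both scalars at the equilibrium shows $c_1=c_2=u(\rho^*,\sigma^*)$. The remaining difference therefore vanishes and the total derivative is zero. The two delicate points to justify carefully are (i) the non-commutative trace-exponential derivative, and (ii) the passage from the saddle-point inequalities to the matrix identities $\Phi(\sigma^*)\propto\id$ and $\Phi^\dagger(\rho^*)\propto\id$, for which full-rankness of $(\rho^*,\sigma^*)$ is essential.
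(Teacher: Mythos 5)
Your proof is correct, but it takes a genuinely different route from the paper's. The paper handles the general (non-commuting) case by a limiting argument: it uses Observation \ref{obs:equivalence} to view the continuous flow as the $\mu \to 0$ limit of the discrete MMWU iterates, and then transfers Theorem \ref{thm:qre_mwu} (the per-step change of the total relative entropy is $o(\mu)$) to conclude that the time derivative vanishes; a direct differentiation proof appears in the paper only for the commuting ``classical'' case, because the primal-space replicator form $d\rho/dt = \rho[\Phi(\sigma)-\Tr(\rho\Phi(\sigma))\id]$ requires $A(t)$ and $\dot A(t)$ to commute. You instead differentiate directly in continuous time but work in the dual space: writing $\log\rho(t) = A(t) - \log\big(\Tr\exp A(t)\big)\id$ removes any need for a closed form of $d\rho/dt$, and the single non-commutative ingredient, $\frac{d}{dt}\Tr\exp(A(t)) = \Tr\big(\exp(A(t))\dot A(t)\big)$, is justified correctly via Duhamel's formula plus cyclicity of the trace, so no commutativity assumption enters anywhere. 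Your derivation of $\Phi(\sigma^*)=c\,\id$ and $\Phi^\dagger(\rho^*)=c\,\id$ from full-rankness of the equilibrium is also sound, and is precisely the fact the paper invokes informally in its classical-case proof when it asserts that deviations from a fully mixed equilibrium leave the payoff unchanged. As for what each approach buys: your argument is self-contained and fully rigorous in the general quantum setting, arguably on firmer ground than the paper's, whose general proof leans on the somewhat informal claims that the Euler discretization error is $O(\mu^2)$ and that the derivative can be evaluated along discrete MMWU points; the paper's route, in exchange, reuses its already-established discrete-time result and makes the discrete-to-continuous correspondence explicit, which serves its broader narrative connecting MMWU and quantum replicator dynamics.
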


\subsection{Poincar\'e Recurrence in Quantum Zero-Sum Games}

Now that we have described analytical results surrounding the day-to-day behaviour of quantum replicator dynamics, we seek to understand the \emph{dynamics} of the trajectories. After all, invariance of quantum relative entropy does not fully describe how the system moves over time. We show that for any two-player zero-sum quantum game updated with replicator dynamics, the system exhibits \emph{Poincar\'e recurrence}, insofar as the game is zero-sum and has a fully-mixed Nash equilibria. As introduced in Section \ref{sec:prelims}, the notion of Poincar\'e recurrence is a weaker version of periodicity. To be precise, for almost all initial conditions $\rho_0 \in \mathcal{P}$, the replicator dynamics return arbitrarily close to $\rho_0$ infinitely often. 

\begin{theorem} \label{thm:recurrence}
The quantum replicator dynamics given in Equations \ref{eqn:replicator} are Poincar\'e recurrent in any two player zero-sum game which has a fully-mixed Nash equilibrium.
\end{theorem}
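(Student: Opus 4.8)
The plan is to verify the two hypotheses of the Poincar\'e recurrence theorem (Theorem~\ref{thm:poincareoriginal})---volume preservation and boundedness of orbits---after transporting the dynamics to a convenient coordinate system, and then to transfer the conclusion back through a diffeomorphism. The remark preceding Observation~\ref{obs:equivalence} shows that, unlike the classical case, we cannot rewrite the replicator equations in ``relative fitness'' form, because the cumulative payoff $\int_0^t \Phi(\sigma(\tau))d\tau$ and $\Phi(\sigma(t))$ need not commute. This motivates working not with the density matrices $\rho,\sigma$ directly but with the dual (cumulative payoff) variables $A,B$, where the dynamics are clean: differentiating Equations~\ref{eqn:A_t} and~\ref{eqn:B_t} gives $\dot A = \Phi(\sigma)$ and $\dot B = -\Phi^\dagger(\rho)$. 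Since $\rho = \exp(A)/\Tr\exp(A)$ is unchanged when a multiple of the identity is added to $A$, I would fix this gauge by restricting to traceless Hermitian matrices; the Gibbs map $A \mapsto \exp(A)/\Tr\exp(A)$ is then a diffeomorphism from the space of traceless Hermitian matrices onto the interior of the density-matrix space, and taking the product over both players yields the diffeomorphism $\mathbf{f}:\mathrm{int}(\mathcal{P}) \to \mathcal{C}$ alluded to in the preliminaries.

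Step one is volume preservation in $\mathcal{C}$. The essential point is the cross-coupling of zero-sum games: $\dot A = \Phi(\sigma)$ depends on the trajectory only through $\sigma$ (equivalently through $B$), while $\dot B = -\Phi^\dagger(\rho)$ depends only through $\rho$ (equivalently through $A$). Hence the two diagonal blocks of the Jacobian, $\partial \dot A/\partial A$ and $\partial \dot B/\partial B$, vanish identically---regardless of the nonlinearity of the matrix-exponential map---so the divergence of the transported vector field is zero. By Liouville's theorem the flow is volume preserving on $\mathcal{C}$.

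Step two is boundedness, and this is where Theorem~\ref{thm:qre_replicator} does the work. Along any trajectory the quantity $S(\rho^* \| \rho(t)) + S(\sigma^* \| \sigma(t))$ equals its constant initial value, say $c_0$. Because $\rho^*$ and $\sigma^*$ are full rank (the equilibrium is fully mixed), the relative entropy $S(\rho^* \| \rho)$ diverges to $+\infty$ as $\rho$ approaches a rank-deficient state on the boundary, since the support condition $\mathrm{supp}(\rho^*) \subseteq \mathrm{supp}(\rho)$ fails there. Consequently the level set $\{ S(\rho^*\|\rho) + S(\sigma^*\|\sigma) = c_0 \}$ is a closed subset of $\mathrm{int}(\mathcal{P})$ bounded away from the boundary, hence compact, and the orbit lives inside it. Its image under the continuous map $\mathbf{f}$ is compact, so orbits in $\mathcal{C}$ are bounded.

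Finally, with volume preservation and bounded orbits established in $\mathcal{C}$, Theorem~\ref{thm:poincareoriginal} yields Poincar\'e recurrence for the transported flow. Since $\mathbf{f}$ is a diffeomorphism conjugating the two flows, it carries open sets to open sets and Lebesgue-null sets to null sets, so recurrence of almost every orbit transfers back to the quantum replicator dynamics on $\mathrm{int}(\mathcal{P})$. I expect the main obstacle to lie in Step one: verifying rigorously that the gauge-fixed Gibbs map is a genuine diffeomorphism onto the interior and that the vector field it transports is exactly $(\dot A, \dot B)$, so that the vanishing-diagonal-block computation is legitimate, together with making the divergence-at-the-boundary claim of Step two fully precise. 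The conceptual novelty relative to the classical argument is precisely this move: one abandons the primal replicator form (unavailable here due to non-commutativity) and runs the entire volume and boundedness analysis in the dual matrix coordinates.
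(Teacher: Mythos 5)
Your proposal takes essentially the same route as the paper's proof: pass to the dual (cumulative-payoff) variables, fix the additive-identity gauge so the Gibbs map becomes a diffeomorphism onto the interior density matrices (Propositions~\ref{prop:map1} and~\ref{prop:diffeomorphic}), establish volume preservation from the vanishing diagonal Jacobian blocks of the cross-coupled vector field plus Liouville's theorem (Lemma~\ref{lemma:volumeconservation}), establish bounded orbits from the conservation law of Theorem~\ref{thm:qre_replicator} together with full rank of the equilibrium (Lemma~\ref{lemma:boundedorbits}), invoke Theorem~\ref{thm:poincareoriginal}, and transfer recurrence back through the diffeomorphism. The only differences are implementation details: the paper fixes the gauge by zeroing the $(1,1)$-entry (Definition~\ref{def:canonical}) rather than the trace, and proves boundedness by explicitly bounding the eigenvalues of $A'(t)$ and $B'(t)$ rather than by your compact-level-set argument---both being equivalent realizations of the same idea.
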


The proof of this main theorem involves carefully piecing together several auxiliary results, which we will describe in the rest of the section. Furthermore, we stress that due to the non-commutative nature of quantum systems, the standard (classical) approach of differentiating the discrete-time dynamics in the primal space of probability distributions does not apply directly unless we have the highly unlikely situation where $\int_0^t \Phi(\sigma(\tau))d\tau$ and $\Phi(\sigma(t))$ (resp. $\int_0^t \Phi^\dagger(\rho(\tau))d\tau$ and $\Phi^\dagger(\rho(t))$) commute. This problem with carrying over the standard approach is explicitly discussed in Appendix \ref{appsecs:proofs_replicator}. 

For the proof in the quantum setting, we first define a \emph{canonical transformation} on the space of the matrices $A(t)$ and $B(t)$, which will be crucial in proving the theorem. 

\begin{definition}[Canonical transformation]\label{def:canonical}
    We define the canonical transformation of $A'(t)$ and $B'(t)$ to be a mapping of $A(t)$ and $B(t)$ as defined by Equations \ref{eqn:A_t} and \ref{eqn:B_t}. In particular, we define
    \begin{equation}\label{eqn:canonical}
        \begin{aligned}
        A'(t) &{}= A(t) - (v^\dag A(t) v)\mathbb{1}\\
        B'(t) &{}= B(t) - (v^\dag B(t) v)\mathbb{1}
    \end{aligned}
    \end{equation}
    where $v$ is a fixed vector defined as $v = [1, 0 \dots 0]^\top$, such that the values of $v^\dag A(t) v$ and $v^\dag B(t) v$ are real numbers corresponding to the $(1,1)$-th element of matrices $A(t)$ and $B(t)$ for all $t$.
    Notice that this creates matrices $A'(t)$ and $B'(t)$ which have $0$ as the $(1,1)$-th entry.
\end{definition}
Under the transformation in Definition~\ref{def:canonical}, the vector fields $\dot{A}'(t) = F(A')$ and $\dot{B}'(t) = F(B')$ are given by:
\begin{equation}\label{eqn:vectorfield}
\begin{aligned}
    \dot{A}'(t) &{}= \Phi(\sigma(t)) - (v^\dag\Phi(\sigma(t))v)\mathbb{1}\\
    \dot{B}'(t) &{}= -\Phi^\dagger(\rho(t)) + (v^\dag\Phi^\dagger(\rho(t))v)\mathbb{1}
\end{aligned}
\end{equation}
where $\frac{d}{dt}\left(v^\dag A(t) v\right)$ is given by $v^\dag \frac{dA(t)}{dt} v$.

Moreover, the values of $\rho'(t)$ and $\sigma'(t)$ are defined as:
\begin{equation}
\begin{aligned}
    \rho'(t) &{}= \exp(A'(t))/\mathrm{Tr}(\exp(A'(t)))\\
    \sigma'(t) &{}= \exp(B'(t))/\mathrm{Tr}(\exp(B'(t)))
\end{aligned}
\end{equation}


\begin{proposition}\label{prop:map1}
    The dynamics of $\rho(t)$ and $\sigma(t)$ remain the same after undergoing the canonical transformation. Equivalently, $A'(t)$ and $A(t)$ $($resp. $B'(t)$ and $B(t))$ admit the same strategy $\rho(t)$ $($resp. $\sigma(t))$.
\end{proposition}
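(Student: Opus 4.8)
The plan is to show that the canonical transformation in Definition~\ref{def:canonical} amounts to subtracting a scalar multiple of the identity from $A(t)$ (resp. $B(t)$), and that such a shift leaves the Gibbs state $\exp(\cdot)/\mathrm{Tr}(\exp(\cdot))$ unchanged. Concretely, write $A'(t) = A(t) - c(t)\mathbb{1}$ where $c(t) = v^\dag A(t) v$ is a real scalar for each $t$. The key algebraic fact I would exploit is that the identity commutes with everything, so $\exp(A(t) - c(t)\mathbb{1}) = \exp(-c(t))\exp(A(t))$, since $-c(t)\mathbb{1}$ and $A(t)$ commute and the exponential of a sum of commuting matrices factors.

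First I would substitute this factorization into the definition of $\rho'(t)$. The numerator becomes $\exp(A'(t)) = \exp(-c(t))\exp(A(t))$, and the denominator becomes $\mathrm{Tr}(\exp(A'(t))) = \exp(-c(t))\mathrm{Tr}(\exp(A(t)))$, using linearity of the trace and the fact that $\exp(-c(t))$ is just a positive scalar. The scalar factor $\exp(-c(t))$ then cancels between numerator and denominator, yielding
\begin{equation}
\rho'(t) = \frac{\exp(-c(t))\exp(A(t))}{\exp(-c(t))\mathrm{Tr}(\exp(A(t)))} = \frac{\exp(A(t))}{\mathrm{Tr}(\exp(A(t)))} = \rho(t).
\end{equation}
The identical argument applied to $B'(t) = B(t) - (v^\dag B(t) v)\mathbb{1}$ gives $\sigma'(t) = \sigma(t)$. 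This establishes that the strategies are literally equal, not merely dynamically equivalent, which is the stronger of the two readings of the statement.

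To complete the ``dynamics remain the same'' claim I would note that since $\rho'(t) = \rho(t)$ and $\sigma'(t) = \sigma(t)$ hold pointwise for all $t$, their time derivatives $d\rho'/dt$ and $d\sigma'/dt$ coincide with $d\rho/dt$ and $d\sigma/dt$ as well, so the vector fields governing the strategy trajectories are preserved. As a consistency check I would verify that the vector field \eqref{eqn:vectorfield} for $\dot{A}'(t)$ is exactly the time derivative of $A(t) - c(t)\mathbb{1}$: differentiating gives $\dot{A}(t) - \dot{c}(t)\mathbb{1} = \Phi(\sigma(t)) - (v^\dag \Phi(\sigma(t)) v)\mathbb{1}$, using $\dot{A}(t) = \Phi(\sigma(t))$ from Equation~\ref{eqn:A_t} and $\dot{c}(t) = v^\dag \dot{A}(t) v = v^\dag \Phi(\sigma(t)) v$, which matches the stated expression.

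I do not anticipate a genuine obstacle here; the whole proposition rests on the single observation that additive scalar shifts of the exponent are invisible to the normalized exponential (the softmax/Gibbs map is shift-invariant along the identity direction). The only point requiring mild care is confirming that $c(t) = v^\dag A(t) v$ is indeed a real scalar so that $\exp(-c(t)\mathbb{1}) = \exp(-c(t))\mathbb{1}$ is a legitimate scalar multiple of the identity; this follows because $A(t)$ is Hermitian (being an integral of Hermitian operators $\Phi(\sigma(\tau))$) and $v^\dag A(t) v$ is the expectation of a Hermitian operator in a fixed state, hence real, as already noted in Definition~\ref{def:canonical}.
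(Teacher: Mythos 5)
Your proof is correct and follows essentially the same route as the paper's: factor $\exp(A(t) - (v^\dag A(t) v)\mathbb{1})$ into $\exp(-v^\dag A(t) v)\exp(A(t))$ using commutativity with the identity, and cancel the scalar factor between numerator and denominator of the normalized exponential. Your additional remarks on the reality of $c(t)$ and the consistency check against Equation~\ref{eqn:vectorfield} are sound but not needed beyond what the paper does.
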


\begin{proposition}\label{prop:diffeomorphic}
    The mappings $A'(t)$ and $\rho(t)$ $($resp. $B'(t)$ and $\sigma(t))$ are diffeomorphic to one another.
\end{proposition}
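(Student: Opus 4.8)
The plan is to exhibit the assignment $\Psi : A' \mapsto \exp(A')/\mathrm{Tr}(\exp(A'))$ explicitly as a diffeomorphism between the space $\mathcal{M}_0 := \{A \in \mathrm{Herm}(\mathcal{A}) : v^\dag A v = 0\}$ of Hermitian matrices whose $(1,1)$-entry vanishes and the set $\mathcal{D}^\circ$ of full-rank (positive-definite) density matrices on $\mathcal{A}$. Both are smooth manifolds of the same real dimension $n^2-1$: $\mathcal{M}_0$ is the linear subspace of the $n^2$-dimensional real vector space $\mathrm{Herm}(\mathcal{A})$ cut out by the single real linear constraint $v^\dag A v = 0$, while $\mathcal{D}^\circ$ is an open subset of the $(n^2-1)$-dimensional affine space of trace-one Hermitian matrices. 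Since the canonical transformation of Definition~\ref{def:canonical} forces $A'(t)$ into $\mathcal{M}_0$, establishing that $\Psi$ is a diffeomorphism is exactly the assertion that $A'(t)$ and $\rho(t)$ parametrize diffeomorphic spaces (the argument for $B'(t)$ and $\sigma(t)$ is identical, with $\mathcal{B} = \mathbb{C}^m$ replacing $\mathcal{A}$).

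First I would check that $\Psi$ is smooth. The matrix exponential is real-analytic on $\mathrm{Herm}(\mathcal{A})$, its output $\exp(A')$ is positive-definite and hence has strictly positive trace, so $\mathrm{Tr}(\exp(A'))$ never vanishes and the quotient $\Psi$ is smooth on all of $\mathcal{M}_0$, with image landing in $\mathcal{D}^\circ$ by construction. Next I would build the inverse and verify it is smooth. Given $\rho \in \mathcal{D}^\circ$, define $\Psi^{-1}(\rho) := \log\rho - (v^\dag (\log\rho) v)\mathbb{1}$, the gauge-fixed logarithm, which lies in $\mathcal{M}_0$. The two composition identities follow from the fact that $\exp$ and $\log$ are mutually inverse on the positive-definite cone together with the shift relations $\exp(A - c\mathbb{1}) = e^{-c}\exp(A)$ and $\log(e^{-c}\rho) = \log\rho - c\mathbb{1}$: writing $c := \log\mathrm{Tr}(\exp(A'))$ for $A' \in \mathcal{M}_0$ gives $\log\Psi(A') = A' - c\mathbb{1}$, whose $(1,1)$-entry is $-c$, so subtracting it restores $A'$; conversely, setting $d := v^\dag(\log\rho)v$ one finds $\Psi(\Psi^{-1}(\rho)) = e^{-d}\rho/\mathrm{Tr}(e^{-d}\rho) = \rho$ since $\mathrm{Tr}(\rho)=1$. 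Smoothness of $\Psi^{-1}$ follows because the matrix logarithm is real-analytic on the open positive-definite cone and the remaining operations (extracting the $(1,1)$-entry and subtracting a multiple of $\mathbb{1}$) are linear.

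A smooth bijection with a smooth two-sided inverse between equidimensional smooth manifolds is a diffeomorphism, which is the desired conclusion. I expect the main subtlety to be the bookkeeping of the gauge term, namely confirming that $A' = \log\rho - (v^\dag(\log\rho)v)\mathbb{1}$ is the \emph{unique} preimage in $\mathcal{M}_0$. This is precisely where the $(1,1)$-normalization of the canonical transformation earns its keep: it eliminates the usual ambiguity $\log\rho \sim \log\rho + c\mathbb{1}$ that would otherwise make the normalized exponential map non-injective. The analytic content, i.e. smoothness of $\exp$ and $\log$, is standard, so essentially all of the work is in verifying the two composition identities carefully.
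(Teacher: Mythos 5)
Your proof takes essentially the same route as the paper's: both exhibit the explicit inverse $f^{-1}(\rho) = \log(\rho) - (v^\dag (\log\rho) v)\mathbb{1}$ and conclude via smoothness of the matrix exponential and logarithm. Yours is in fact slightly more complete, since the paper only verifies the composition $f(f^{-1}(\rho)) = \rho$, whereas you also check $\Psi^{-1}(\Psi(A')) = A'$ --- the direction where the vanishing $(1,1)$-entry is actually needed to rule out the ambiguity $\log\rho \sim \log\rho + c\mathbb{1}$ and establish injectivity.
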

Proposition \ref{prop:diffeomorphic} will be of crucial importance to our proof technique, since we first prove recurrence for the system described by $A'(t)$ and $B'(t)$, then recover recurrence in $\rho(t)$ and $\sigma(t)$. 

To show Poincar\'e recurrence of Equations~\ref{eqn:replicator}, we first prove two key properties: (i) the flow of $\dot{A'}$ is volume preserving, meaning that the trace of the Jacobian of the respective vector fields $\dot{A'}(t) = F(A')$  and $\dot{B'}(t) = F(B')$ are zero, and (ii) $A'$ and $B'$ have bounded orbits from any interior initial condition. Then, Poincar\'e recurrence of $A'$ and $B'$ follows from Poincar\'{e}'s recurrence theorem.

\paragraph{Volume Conservation.} We introduce a lemma which shows that in two-player zero-sum quantum replicator dynamics, the canonical transformation produces a dynamical system which preserves volume.

\begin{lemma} \label{lemma:volumeconservation}

For two-player zero-sum quantum replicator dynamics, the vector fields that arise as a result of the canonical transformation in Definition~\ref{def:canonical} are volume preserving.
\end{lemma}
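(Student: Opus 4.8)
The plan is to reduce the statement to a single divergence computation via Liouville's theorem (as stated in the preliminaries): since the divergence of a continuously differentiable vector field is continuous, it suffices to show that the divergence of the combined field driving $(A'(t), B'(t))$ vanishes identically, whence volume is conserved. First I would fix an honest real coordinate chart on the domain. Each $A'$ is an $n\times n$ Hermitian matrix whose $(1,1)$ entry is pinned to $0$, hence determined by $n^2-1$ real parameters (the $n-1$ remaining diagonal entries together with the real and imaginary parts of the $\binom{n}{2}$ strictly upper-triangular entries), and similarly $B'$ is carried by $m^2-1$ real coordinates. Before computing anything, I would verify that this constraint is genuinely preserved by the flow, i.e.\ that $[\dot{A}']_{11} = [\dot{B}']_{11} = 0$; this is immediate from Equation~\ref{eqn:vectorfield}, since $v = [1,0,\dots,0]^\top$ gives $v^\dag \Phi(\sigma) v = [\Phi(\sigma)]_{11}$, so the subtracted multiple of the identity exactly cancels the $(1,1)$ entry. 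Thus the dynamics really do live on $\mathbb{R}^{(n^2-1)+(m^2-1)}$ and Liouville's theorem applies.

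The heart of the argument is a structural observation about how the two players are coupled. Using Proposition~\ref{prop:map1} to write $\rho = \exp(A')/\Tr(\exp(A'))$ and $\sigma = \exp(B')/\Tr(\exp(B'))$, Equation~\ref{eqn:vectorfield} reads $\dot{A}' = \Phi(\sigma(B')) - (v^\dag\Phi(\sigma(B'))v)\mathbb{1}$ and $\dot{B}' = -\Phi^\dagger(\rho(A')) + (v^\dag\Phi^\dagger(\rho(A'))v)\mathbb{1}$. The crucial point is that $\dot{A}'$ is a function of $B'$ alone and $\dot{B}'$ is a function of $A'$ alone — a direct manifestation of the zero-sum (bilinear) structure, in which each player's update is driven entirely by the opponent's current state. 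Consequently the Jacobian of the full vector field is block-antidiagonal: its two diagonal blocks $\partial\dot{A}'/\partial A'$ and $\partial\dot{B}'/\partial B'$ are identically zero.

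I would then finish by identifying the divergence with the trace of this Jacobian, equivalently the sum of the traces of the two diagonal blocks. Since $\dot{A}'$ has no dependence on the coordinates of $A'$, every term $\partial[\dot{A}']_y/\partial y$ summed over the coordinates $y$ of $A'$ vanishes, and symmetrically for $B'$; hence $\text{div}\,F \equiv 0$ on the entire interior domain, and Liouville's theorem yields volume preservation.

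I expect the only delicate step to be the bookkeeping in the first paragraph — confirming that the canonical transformation's subtracted term keeps the trajectory inside the $(1,1)=0$ subspace, and setting up a genuine real-coordinate chart on the Hermitian matrices so that the divergence is well defined — rather than the divergence computation itself, which collapses to $0$ the moment the cross-dependence structure is recognized. This parallels the classical zero-sum argument of \cite{mertikopouloscycles}; the novelty here is that the volume-preserving coordinates are the canonically transformed matrices $A', B'$ rather than primal probability vectors, which is precisely the reason the canonical transformation of Definition~\ref{def:canonical} was introduced.
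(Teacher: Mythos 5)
Your proof takes essentially the same route as the paper's: both reduce the claim to a divergence computation via Liouville's theorem and rest on the key observation that $\dot{A}'$ depends only on $B'$ while $\dot{B}'$ depends only on $A'$ (the separable, zero-sum coupling), so the diagonal blocks of the Jacobian vanish identically and the divergence is zero. Your extra bookkeeping --- the explicit real coordinate chart on Hermitian matrices with the $(1,1)$ entry pinned to zero, and the verification that this constraint is preserved by the flow --- makes rigorous some details the paper's proof glosses over, but it is not a different argument.
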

The proof of Lemma \ref{lemma:volumeconservation} follows from considering the flows of $A'(t)$ and $B'(t)$ and showing that the divergences of the vector fields defined in Equations~\ref{eqn:vectorfield} are equal to zero. A straightforward application of Liouville's theorem completes the proof.

\paragraph{Bounded Orbits.} We now show that the transformed dynamical system always has bounded orbits when initialized on the interior of the space of probability density matrices.
\begin{lemma} \label{lemma:boundedorbits}
    For any finite initial points $A(0)$ and $B(0)$, the dynamics mapped to $A'(t)$ and $B'(t)$ via the transformation in Definition~\ref{def:canonical} have bounded orbits.
\end{lemma}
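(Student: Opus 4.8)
The plan is to confine the transformed orbits by exploiting the constant of motion established in Theorem~\ref{thm:qre_replicator}. Since the sum $S(\rho^* \| \rho(t)) + S(\sigma^* \| \sigma(t))$ is invariant, it equals its value at $t = 0$; and because $A(0)$ and $B(0)$ are finite Hermitian matrices, the states $\rho(0) = \exp(A(0))/\Tr(\exp(A(0)))$ and $\sigma(0)$ are strictly positive definite, so this value is a finite constant $C$. As quantum relative entropy is nonnegative, each summand is trapped in $[0,C]$ for all $t$; in particular $S(\rho^* \| \rho(t)) \le C$ and $S(\sigma^* \| \sigma(t)) \le C$ throughout the trajectory.

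Next I would rewrite $S(\rho^* \| \rho(t))$ purely in terms of $A'(t)$. By Proposition~\ref{prop:map1} we have $\rho = \exp(A')/\Tr(\exp(A'))$, so $\log \rho = A' - \log\!\big(\Tr \exp(A')\big)\id$, and a direct computation using $\Tr(\rho^*)=1$ gives
\begin{equation*}
S(\rho^* \| \rho) = \log\!\big(\Tr \exp(A')\big) - \Tr(\rho^* A') - S(\rho^*).
\end{equation*}
Writing $g(A') = \log(\Tr \exp(A')) - \Tr(\rho^* A')$ (a convex, $\id$-shift-invariant Gibbs free energy relative to $\rho^*$), the bound from the first step becomes $g(A'(t)) \le C + S(\rho^*)$ for all $t$. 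The same manipulation yields an analogous bound $g(B'(t)) \le C + S(\sigma^*)$ with $\sigma^*$ in place of $\rho^*$.

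The crux is then to show $g$ is coercive on the subspace $\mathcal{V} = \{X \in \mathrm{Herm}(\mathcal{A}) : v^\dag X v = 0\}$ in which $A'(t)$ lives, so that membership in a bounded sublevel set forces $\|A'(t)\|$ to stay bounded. Using $\log(\Tr \exp(A')) \ge \lambda_{\max}(A')$, I lower-bound $g(A') \ge \lambda_{\max}(A') - \Tr(\rho^* A')$. This right-hand side is homogeneous of degree one and nonnegative, and it vanishes exactly when $\rho^*$ is supported on the top eigenspace of $A'$; since $\rho^*$ is full rank (fully-mixed Nash), this can only occur when $A'$ is a multiple of $\id$, which inside $\mathcal{V}$ forces $A' = 0$. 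Restricting to the unit sphere of $\mathcal{V}$ — which is compact and omits $0$ — continuity supplies a uniform constant $c > 0$ with $\lambda_{\max}(A') - \Tr(\rho^* A') \ge c\|A'\|$, whence $g(A') \ge c\|A'\|$ and $\|A'(t)\| \le (C + S(\rho^*))/c$. The identical argument bounds $\|B'(t)\|$, completing the proof.

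The main obstacle I anticipate is pinning down the uniform coercivity constant $c$: it relies jointly on the full-rank (fully-mixed) hypothesis on the Nash equilibrium, which rules out $\rho^*$ concentrating on any proper top eigenspace, and on the fact that the canonical transformation of Definition~\ref{def:canonical} excises precisely the $\id$-direction, so $\mathcal{V}$ contains no nonzero scalar matrix. It is exactly this combination that makes the degree-one lower bound strictly positive away from the origin; without the full-rank assumption the coercivity — and hence boundedness — can fail.
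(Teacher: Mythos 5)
Your proof is correct, and it takes a genuinely different route through the core step than the paper does. Both arguments start identically: the conservation law of Theorem~\ref{thm:qre_replicator} caps $S(\rho^* \| \rho(t)) + S(\sigma^* \| \sigma(t))$ by its (finite) initial value $C$. From there the paper proceeds in the \emph{primal} variable: it converts the entropy bound into a spectral statement about $\rho(t)$ (all eigenvalues bounded away from $0$, via $\rho^* \geq \lambda_{\min}(\rho^*)\id$ and a contradiction in a basis diagonalizing $\rho(t)$), then transfers this to $A'(t)$ by a second contradiction --- if $\lambda_{\max}(A') \to +\infty$ or $\lambda_{\min}(A') \to -\infty$, the anchoring $A'_{1,1} = 0$ (hence $\lambda_{\min}(A') \leq 0 \leq \lambda_{\max}(A')$) forces $\lambda_{\min}(\rho(t)) \to 0$ --- and finally converts bounded spectrum into bounded entries via the eigenbasis. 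You instead work entirely in the \emph{dual} variable: you rewrite the relative entropy as the free-energy functional $g(A') = \log\bigl(\Tr \exp(A')\bigr) - \Tr(\rho^* A')$ up to the additive constant $S(\rho^*)$, and prove $g$ is coercive on the subspace $\{X : v^\dag X v = 0\}$ by bounding it below by the degree-one homogeneous function $\lambda_{\max}(A') - \Tr(\rho^* A')$, which vanishes only at scalar multiples of $\id$ (here full-rankness of $\rho^*$ is used) and hence only at $0$ inside the subspace (here the normalization of Definition~\ref{def:canonical} is used); compactness of the unit sphere then yields $g(A') \geq c\|A'\|$. The ingredients are the same three facts --- the invariant, the fully-mixed hypothesis, and the excision of the $\id$-direction --- but your packaging buys an explicit quantitative radius $\|A'(t)\| \leq (C + S(\rho^*))/c$ and makes visible exactly where each hypothesis enters, whereas the paper's two-stage eigenvalue argument is more elementary (no compactness/minimum argument) but purely qualitative, concluding boundedness by contradiction rather than exhibiting a bound.
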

The proof of Lemma \ref{lemma:boundedorbits} relies on Theorem~\ref{thm:qre_replicator}, and leverages the hermicity of the matrices involved. Moreover, we use the fact that the canonically transformed matrices $A'(t)$ and $B'(t)$ have zero as the $(1,1)$-th element to bound the eigenvalues of $A'(t)$ and $B'(t)$ away from infinity.

\noindent Now we are ready to prove Theorem \ref{thm:recurrence} using Lemmas \ref{lemma:volumeconservation} and \ref{lemma:boundedorbits}.
\begin{proof}[Proof of Theorem \ref{thm:recurrence}]
    By Lemmas \ref{lemma:volumeconservation} and \ref{lemma:boundedorbits}, as well as the Poincar\'e recurrence theorem introduced in Section \ref{sec:prelims}, we immediately see that the system of replicator equations given by $dA'/dt$ and $dB'/dt$ are Poincar\'e recurrent since they are volume preserving and have bounded orbits. 
    Since the flows of $A'(t)$ and $\rho(t)$ are diffeomorphic to one another (likewise for $B'(t)$ and $\sigma(t)$), $d\rho/dt$ and $d\sigma/dt$ are also Poincar\'e recurrent. This concludes the proof.
\end{proof}
All proofs of the results in this section are provided in Appendix \ref{appsecs:proofs_replicator}.

\section{Experimental Results}
\label{sec:experiments}
\begin{figure}[ht]
    \centering
    \begin{minipage}{.245\linewidth}
      \centering
      \includegraphics[width=.95\linewidth]{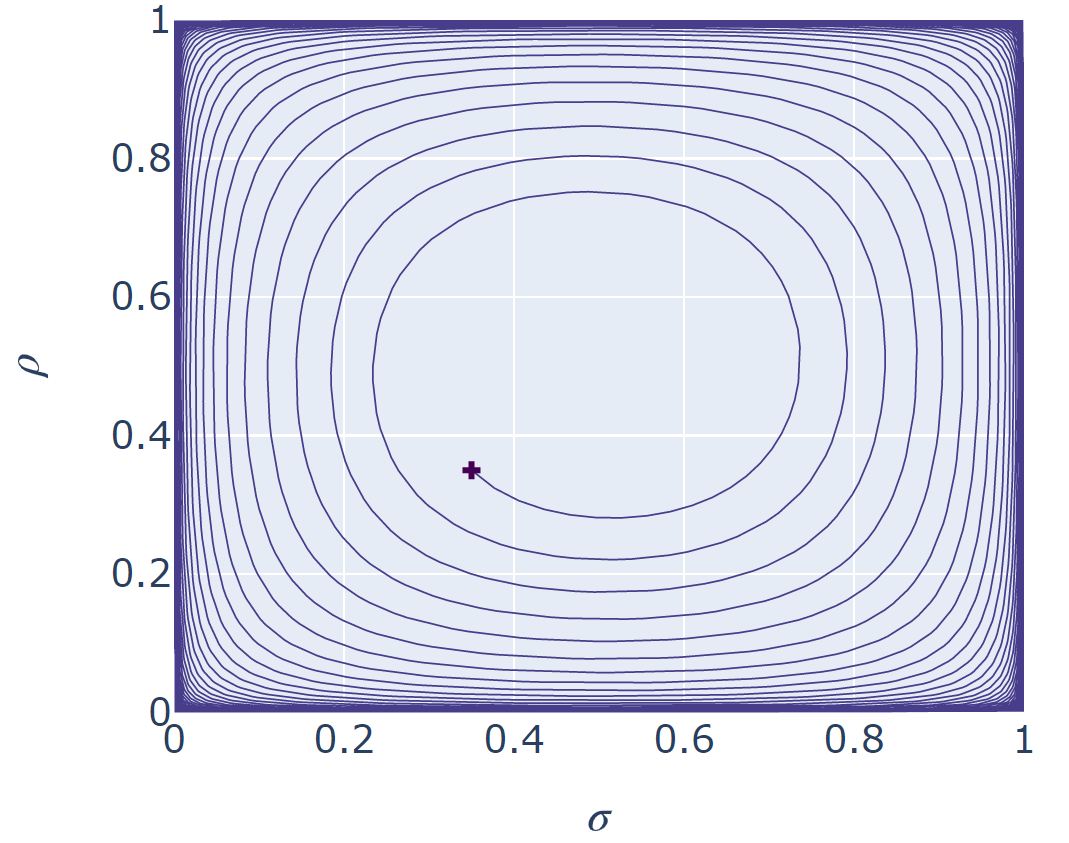}
      $\mu = \log(1+ \frac{1}{t^{1/4}})$
    \end{minipage}
    \begin{minipage}{.245\linewidth}
      \centering
      \includegraphics[width=.95\linewidth]{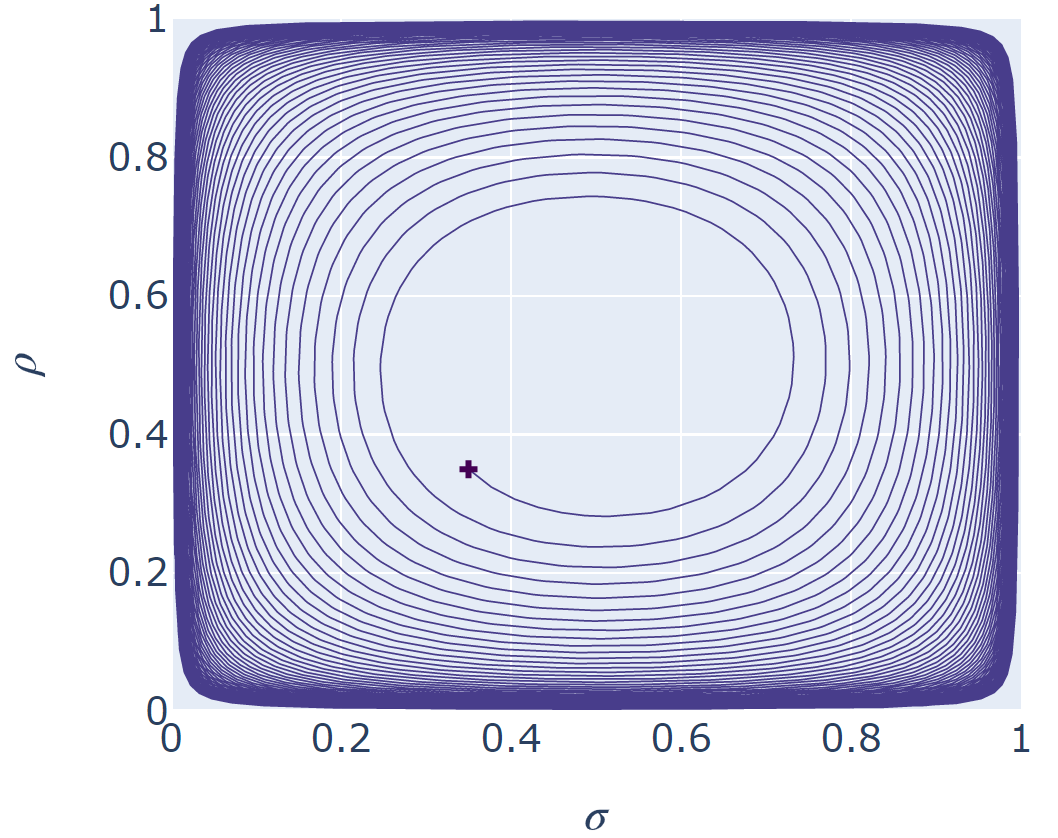}
      $\mu = \log(1+ \frac{1}{t^{1/3}})$
    \end{minipage}
    \begin{minipage}{.245\linewidth}
      \centering
      \includegraphics[width=.95\linewidth]{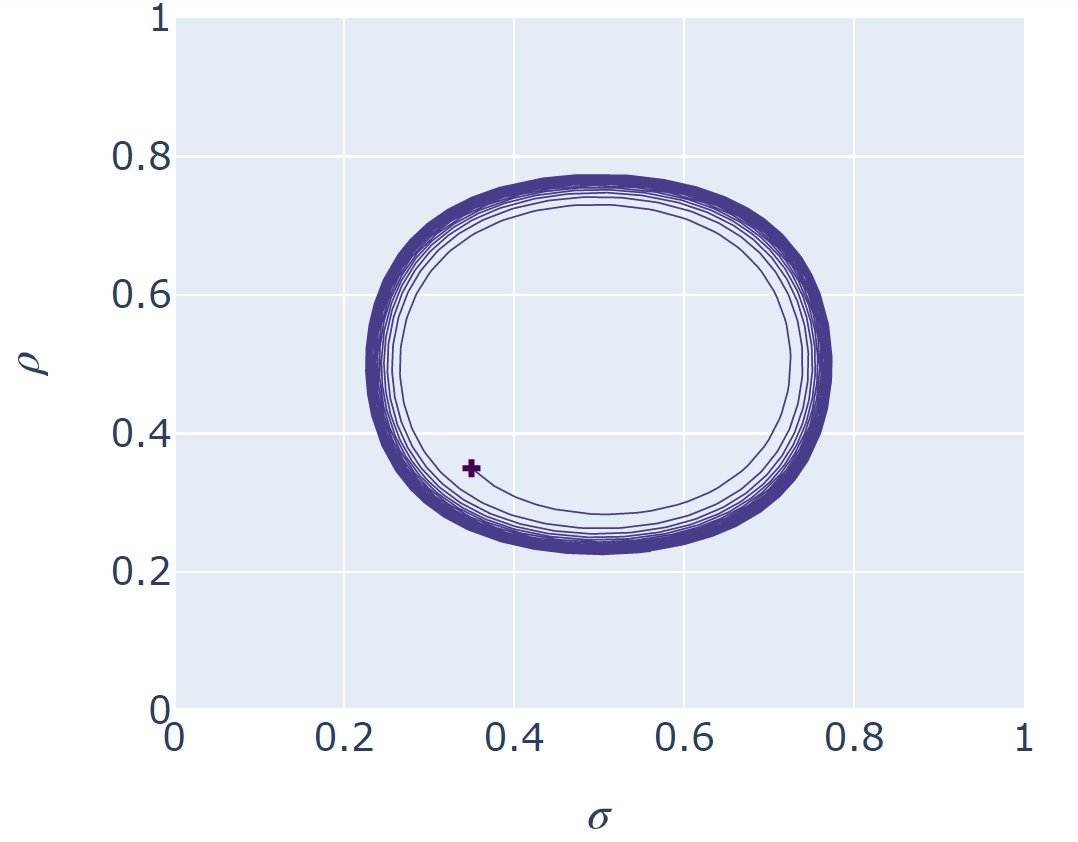}
      $\mu = \log(1+ \frac{1}{t^{1/2}})$
    \end{minipage}
    \begin{minipage}{.245\linewidth}
      \centering
      \includegraphics[width=.95\linewidth]{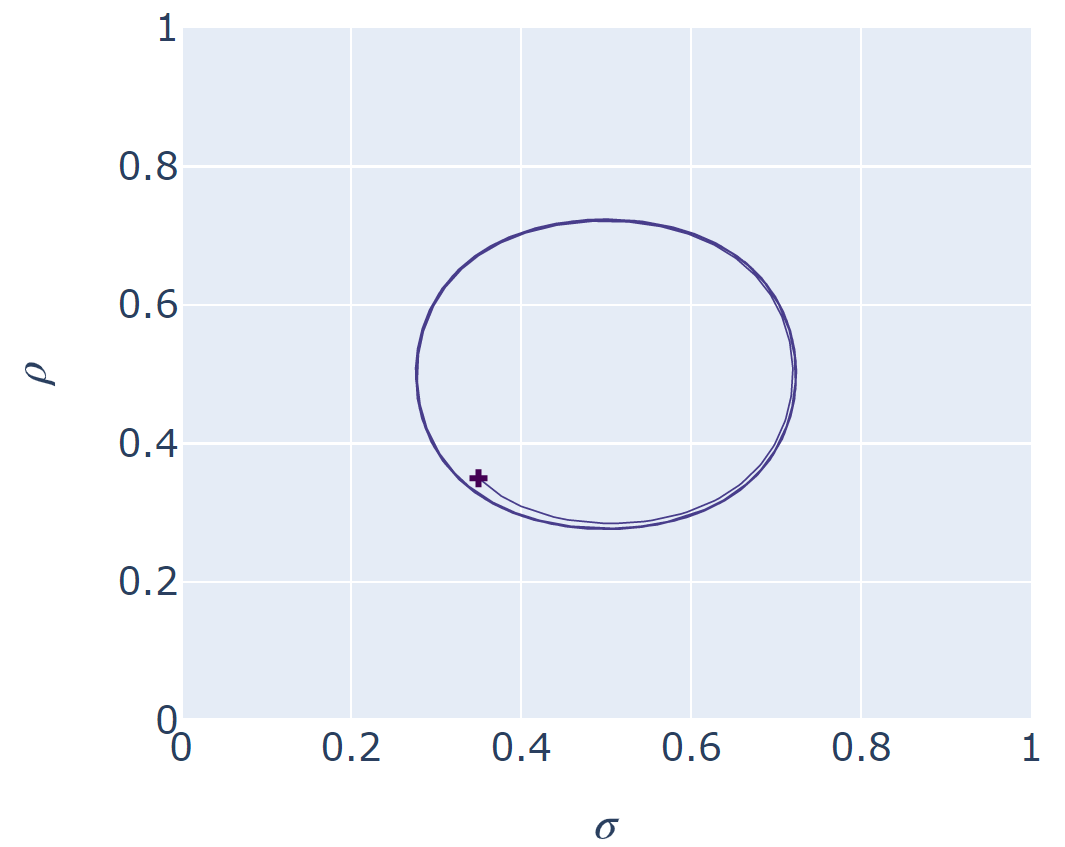}
      $\mu = \log(1+ \frac{1}{t^{2/3}})$
    \end{minipage}%
    \caption{Eigenvalue trajectories for quantum Matching Pennies game with decreasing $\mu$ values.}
    \label{fig:quantum_rps_entropy}
\end{figure}

To corroborate the theoretical results presented in prior sections, we performed relevant simulations of quantum games using both discrete MMWU and replicator dynamics. In the rest of this section, we standardize the use of quantum game matrices obtained via basis transform (described in more detail in Appendix \ref{appsecs:experiments}). This effectively allows us to transform classical games to the matrix setting.


First, we show the  trajectories of the first eigenvalue of each player in a quantum Matching Pennies game, obtained using the discrete MMWU algorithm. We see that in accordance to Theorem \ref{thm:qre_mwu}, the rate of divergence of the trajectories from the uniform Nash goes to zero for cases with rapidly decreasing learning rate $\mu$. 

In the case of replicator dynamics, we present Bloch sphere representations of the trajectories in a quantum Matching Pennies game. The Bloch sphere is a unit 2-sphere representation of a qubit, and we utilize it to visualize the orbits of the replicator dynamics. In particular, the density matrix representing the strategy of each player at each time-step is given as a point within the sphere, and we plot the movement of these orbits over time. According to Theorems \ref{thm:qre_replicator} and \ref{thm:recurrence}, we expect the trajectories of the replicator dynamics to stay on the interior of the Bloch sphere, since the surface of the sphere corresponds to the pure states of the system. We see from Figure \ref{fig:bloch_sphere} that over time, the system never reaches the boundary of the sphere, which experimentally agrees with our theory.

\begin{figure}[ht]
    \centering
    \begin{minipage}{.245\linewidth}
      \centering
      \includegraphics[width=0.95\linewidth]{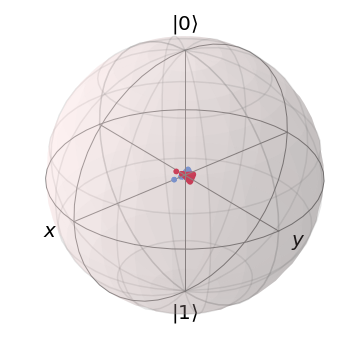}
      $t=1$
    \end{minipage}%
    \begin{minipage}{.245\linewidth}
      \centering
      \includegraphics[width=0.95\linewidth]{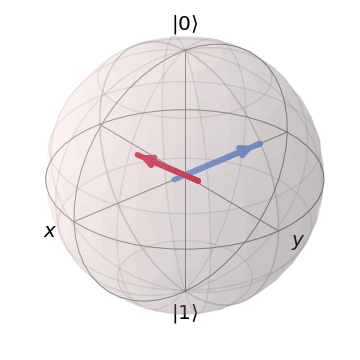}
      $t=50$
    \end{minipage}
    \begin{minipage}{.245\linewidth}
      \centering
      \includegraphics[width=0.95\linewidth]{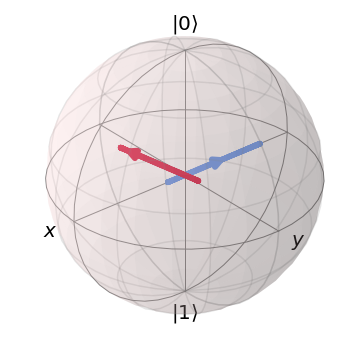}
      $t=370$
    \end{minipage}
    \begin{minipage}{.245\linewidth}
      \centering
      \includegraphics[width=0.95\linewidth]{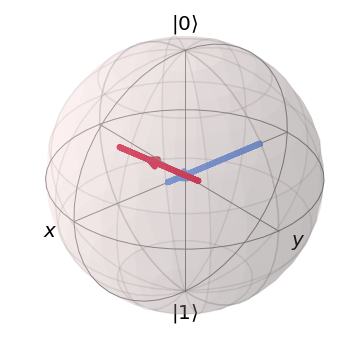}
      $t=1000$
    \end{minipage}
    \caption{Bloch sphere trajectories for quantum Matching Pennies game between Alice (blue) and Bob (red). The arrowheads represent the current state of each player at each time-step. Notice that over time, the orbits oscillate within the interior of the Bloch sphere.}
    \label{fig:bloch_sphere}
\end{figure}

Finally, in the experiments above we have only studied single qubit systems. We provide further experiments which show that our results hold beyond the single qubit setting in Appendix \ref{appsecs:largerscale}.

\section{Conclusion}
\label{sec:conclusion}
In this paper, we studied the properties of  Matrix Multiplicative Weights Update and its continuous analogue, quantum replicator dynamics, in the context of two-player zero-sum quantum games. First, we provide a formulation of quantum replicator dynamics which arises from MMWU. Then, we show that such systems exhibit quantum information-theoretic constant of motions. Finally, we show that in quantum replicator systems with interior Nash equilibria, the dynamics exhibit Poincar\'e recurrence. 

This work constitutes an initial step towards analyzing learning behaviour in games with quantum information. In the classical world, showing that conservation laws and recurrence holds has led to a better understanding of game dynamics in increasingly complex settings \cite{skoulakis2020evolutionary, nagarajan2020chaos, perolat2020poincare}. Similar work is now potentially possible in the quantum setting.

Moreover, one key implication which can be derived from our recurrence results is to encourage novel discretization methods that preserve the connections to volume preservation and more generally, conservative dynamical systems (e.g. Hamiltonians). This line of research has received increased interest in recent years \cite{wibisono2016variational,wibisono2022alternating, diakonikolas2021generalized}. It is a natural question to explore whether such techniques can lead to similar advantages in the quantum setting as well.

Finally, some other interesting directions for future work include: \begin{itemize}
    \item extending our results to multi-agent network generalizations of quantum zero-sum games,
    \item understanding the potential behavior of quantum GANs using our results about learning in quantum zero-sum games, and
    \item understanding the quantum setting for different classes of games, e.g., potential games.
\end{itemize}

\bibliographystyle{apalike}
\bibliography{ref}

\newpage
\begin{appendices}
This supplementary material contains an overview of additional related work in Appendix~\ref{appsecs:related}, 
additional examples and definitions in Appendix~\ref{appsecs:definitions}, proofs omitted from the paper for space considerations in Appendix~\ref{appsecs:proofs_mwu} and~\ref{appsecs:proofs_replicator}, and further experimental results and details in Appendix~\ref{appsecs:experiments}. 

\section{Additional Related Work}
\label{appsecs:related}
We now describe a broader class of related work. In particular, we focus on work regarding $a)$ the MMWU algorithm and its applications as well as $b)$ quantum algorithms in game theory.

\paragraph{MMWU Algorithm and Applications.} MMWU as an algorithm has been widely studied in the past. For instance, MMWU has been applied to the problem of solving semidefinite programs in \cite{brandao2017quantum}. The central idea behind this line of work is the `quantization' of standard SDP solvers using the matrix multiplicative weights method. Furthermore, MMWU has also been applied to the problem of learning a quantum state, also known as state tomography. \cite{aaronson2019online, youssry2019efficient} have utilized MMWU as a framework for online learning in this setting. 

In a broader sense, MMWU can also be considered a member of the class of Matrix Follow-The-Regularized-Leader (FTRL) algorithms. This paradigm has been studied in works such as \cite{hazan2012near, allen2015spectral, hopkins2020robust}, which admits a wider range of applications such as spectral sparsification of graphs. In addition, some more recent works such as \cite{allen2017follow} study extensions to FTRL within the regret minimization framework.

\paragraph{Quantum Computing.} Quantum information theory has proved to be a useful tool in the fields of reinforcement learning and deep learning \cite{bondesan2021hintons, Daoyi_Dong_2008, dunjko2017advances}. The basic results of classical game theory, such as von Neumann's minimax theorem and more have been shown to hold in the quantum context \cite{osti_1513429, Accardi_2020}, while recent work has shown analogous complexity and equilibrium results between classical and quantum games \cite{zhang2012quantum, bostanci2021quantum}.

There is also a rich literature towards solving and analyzing games using quantum algorithms. \cite{van2019quantum} derives a sublinear time quantum algorithm for solving two-player zero-sum games, while \cite{li2020sublinear} derives a similar algorithm for general matrix games. Finally, quantum algorithms have proven to be useful in many machine learning problems such as classification \cite{li2019sublinear} and GANs \cite{lloyd2018quantum}. For a general overview of quantum algorithms in machine learning see \cite{lloyd2013quantum} and \cite{biamonte2017quantum}.

\section{Additional Preliminaries}
\label{appsecs:definitions}

\subsection{An Example: Quantum Matching Pennies}



To provide a clearer picture of quantum games, we illustrate the differences between the classical and quantum versions of the famous Matching Pennies game. In classical game theory, Matching Pennies is defined by the payoff matrix $P =  \begin{bsmallmatrix}1 & -1\\ -1 & 1\end{bsmallmatrix}$. In the terminology defined above, the payoff observable $R$ would then simply be a diagonal $4\times4$ matrix with the elements of the classical payoff matrix on the diagonal:
\[R=
\begin{bmatrix}
1 & 0 & 0 & 0\\
0 & -1 & 0 & 0\\
0 & 0 & -1 & 0\\
0 & 0 & 0 & 1
\end{bmatrix}
\]\\
Let us first consider the case where Alice and Bob play randomized mixed strategies in a classical fashion. Alice and Bob will then simultaneously select \emph{vectors} which represent a probability distribution over the strategies \emph{heads} or \emph{tails}. For example, Alice could return the vector $x_A =  [1/2, 1/2]$ and Bob could return the vector $x_B = [1/3, 2/3]$. The expected utility for each player is then determined using the payoff matrix: 
\[ u_A =  \begin{bmatrix}
1/2 \\
1/2 
\end{bmatrix}^\top \begin{bmatrix}
1 & -1\\
-1 & 1
\end{bmatrix} \begin{bmatrix}
1/3 \\
2/3 
\end{bmatrix} = 0, 
\quad 
u_B =  \begin{bmatrix}
1/3 \\
2/3 
\end{bmatrix}^\top \begin{bmatrix}
1 & -1\\
-1 & 1
\end{bmatrix} \begin{bmatrix}
1/2 \\
1/2
\end{bmatrix} = 0
\]

For the quantum case, Alice and Bob will instead send density matrices that represent their probabilities of playing a given strategy to a referee. For example, Alice might send the matrix $x_A =  \begin{bmatrix}1/2 & 1/2i\\ -1/2i & 1/2\end{bmatrix}$ and Bob could return the matrix $x_B = \begin{bmatrix}1/3 & 1/2i\\ -1/2i & 2/3\end{bmatrix}$. The payoffs to each player are then calculated using the payoff observable R as defined in Equation \ref{eqn:payoffobservable}.

\subsection{Classical MWU Algorithm}
In the paper we focus on the matrix version of MWU, with particular focus on the quantum interpretation of the positive semi-definite matrices used in Algorithm \ref{alg:quantum}. However, many of the results we show are analogous to results shown in classical settings. 

For two-player zero-sum games (i.e. $A = -B^\top$, where $A$ and $B$ are the payoff matrices for players $a$ and $b$.), the classical MWU algorithm is shown in Algorithm \ref{alg:classical} \cite{freund1999adaptive}:
  
  
  
\begin{algorithm}[!htb]
   \caption{MWU for Two-Player Zero-Sum Games\label{alg:classical}}
\begin{algorithmic}
   \STATE {\bfseries Initialize:} $x_{i, a}^0 = 1/n$ and $x_{i, b}^0 = 1/n$ for all $i$ (both players, uniform)
   \FOR{t=1\dots T}
   \STATE Players $a$ and $b$ choose action $i\in\mathcal{S}$ with probability $x_{i,a}^t$ and $x_{i,b}^t$ respectively.
   \FOR{each action $i$}
   \STATE $x_{i, a}^t = \frac{x_{i,a}^{t-1}(1+\epsilon)^{(Ax_b^{t-1})_{i}}}{\sum_{j\in\mathcal{S}}x_{j,a}^{t-1}(1+\epsilon)^{(Ax_b^{t-1})_{j}}}$\\
    $x_{i, b}^t = \frac{x_{i,b}^{t-1}(1+\epsilon)^{(Bx_a^{t-1})_{i}}}{\sum_{j\in\mathcal{S}}x_{j,b}^{t-1}(1+\epsilon)^{(Bx_a^{t-1})_{j}}}$
 \ENDFOR
 \ENDFOR
 
\end{algorithmic}
\end{algorithm}

  
  
  
  Furthermore $(Ax_b^{t-1} )_{i}$ and $(Bx_a^{t-1} )_{i}$ represent the expected utility obtained by players $a$ and $b$ respectively for playing strategy $i$. 

\section{Proofs from Section \ref{sec:mwu}}
\label{appsecs:proofs_mwu}
In this section we present proofs of results in Section \ref{sec:mwu} which were omitted from the main paper due to space considerations.

\subsection{Proof of Corollary \ref{corollary:equality}}
\begin{proof}
We use the definition of quantum relative entropy. First, we consider the change in quantum relative entropy from timestep $j-1$ to $j$. 
    \begin{align*}
        S (\rho^* \| \rho_j) - S (\rho^* \| \rho_{j-1}) &= \mathrm{Tr}\left(\rho^* (\log \rho_{j-1}-\log \rho_j)\right)\\
        &= \mathrm{Tr}\left(\rho^* \left(\log \frac{A_{j-1}}{\mathrm{Tr}A_{j-1}}-\log \frac{A_{j}}{\mathrm{Tr}A_{j}}\right)\right)\\
        &= \mathrm{Tr}\left( \rho^*\left(\log{A_{j-1}}-\log{A_j} + \log{\mathrm{Tr}A_j}-\log{\mathrm{Tr}A_{j-1}}\right)\right)\\
        &= \mathrm{Tr}\left(\rho^*\left(-\mu \Phi(\sigma_{j-1}) + \log{\mathrm{Tr}A_j}-\log{\mathrm{Tr}A_{j-1}}\right) \right)
    \end{align*}
    Hence,
    \begin{equation}\label{eqn:kldivA}
       S (\rho^* \| \rho_j) - S (\rho^* \| \rho_{j-1}) = -\mu \mathrm{Tr} (\rho^* \Phi(\sigma^*) )+ \mathrm{Tr}( \rho^*(\log{\mathrm{Tr}A_j}-\log{\mathrm{Tr}A_{j-1}}))
    \end{equation}
Similarly, we know that 
\begin{equation}\label{eqn:kldivB}
    S (\sigma^* \| \sigma_j) - S (\sigma^* \| \sigma_{j-1}) = -\mu \mathrm{Tr}( \sigma^* \Phi^\dagger(\rho^*) )+ \mathrm{Tr}( \sigma^*(\log{\mathrm{Tr}B_{j}}-\log{\mathrm{Tr}B_{j-1}}))
\end{equation}
Summing up equations \ref{eqn:kldivA} and \ref{eqn:kldivB}:
\begin{align*}
    & \Delta S(\rho^* \| \rho_j) + \Delta S(\sigma^* \| \sigma_j)\\ 
    &= \mu\left[-\mathrm{Tr}(\rho^*\Phi(\sigma^*))+\mathrm{Tr}(\sigma^*\Phi^\dagger(\rho^*))\right] + \log{\frac{\mathrm{Tr}A_j}{\mathrm{Tr}A_{j-1}}}+\log{\frac{\mathrm{Tr}B_j}{\mathrm{Tr}B_{j-1}}} \\
    &= \log{\frac{\mathrm{Tr}A_j}{\mathrm{Tr}A_{j-1}}}+\log{\frac{\mathrm{Tr}B_j}{\mathrm{Tr}B_{j-1}}} 
\end{align*}
\end{proof}

\subsection{Proof of Corollary \ref{corollary:inequality}}
\begin{proof}
We first show the upper bound on the sum $\log\left(\frac{\Tr A_j}{\Tr A_{j-1}}\right)+ \log\left(\frac{\Tr B_j}{\Tr B_{j-1}}\right)$. Note that the definition of $A_j$ is:
$$A_j = \exp\left(\mu \sum_{i=0}^{j-1} \Phi(\sigma_i)\right)$$
We can alternatively write:
\begin{align}\label{eqn:a_j}
    A_j = \exp\left(\log(A_{j-1}) + \mu \Phi(\sigma_{j-1})\right)
\end{align}
Hence, by taking the trace of $A_{j-1}$ and applying Facts \ref{fact:golden} and \ref{fact:exp}:
\begin{align*}
\Tr (A_{j-1})  &\leq \Tr \left[ A_{j} \exp(-\mu \Phi(\sigma_{j-1}) ) \right]& \mbox{(Fact~\ref{fact:golden})} \\
&\leq \Tr \left[A_{j} (\id - \mu \exp(-\mu) \Phi(\sigma_{j-1}) )\right] & \mbox{(Fact~\ref{fact:exp})}\\
&= (\Tr (A_{j})) (1- \mu\exp(-\mu) \Tr \left(\rho_{j} \Phi(\sigma_{j-1}) \right) \\
&\leq  (\Tr (A_{j})) \exp (- \mu\exp(-\mu) \Tr \left(\rho_{j} \Phi(\sigma_{j-1}) )\right). \hfill & (\mbox{$1 + x \leq \exp(x)$})
\end{align*}
Similarly,
\begin{align*}
\Tr  (B_{j-1})  &\leq \Tr \left[ B_{j} \exp(\mu \Phi^\dagger(\rho_{j-1}) )\right] \\
&\leq \Tr \left[ B_{j} (\id + \mu \exp(\mu) \Phi^\dagger(\rho_{j-1}) )\right]\\
&= (\Tr (B_{j})) (1+ \mu\exp(\mu) \Tr \left(\rho_{j-1} \Phi(\sigma_{j}) )\right)\\
&\leq (\Tr (B_{j})) \exp (\mu\exp(\mu) \Tr \left(\rho_{j-1} \Phi(\sigma_{j}) )\right). 
\end{align*}

\noindent By rearranging and taking matrix logarithm on both sides, we obtain:
\begin{align}
        \log\left(\frac{\Tr A_j}{\Tr A_{j-1}}\right) &\geq \mu\exp(-\mu) \Tr \left(\rho_{j} \Phi(\sigma_{j-1}) \right) \ \   &\ \   \log\left(\frac{\Tr B_j}{\Tr B_{j-1}}\right) &\geq -\mu\exp(\mu) \Tr \left(\rho_{j-1} \Phi(\sigma_{j})  \right)
      \end{align}
The inequality in Equation \ref{eqn:lowerboundsum} follows by summing up the two inequalities obtained above and applying Corollary \ref{corollary:equality}. Likewise, we can perform similar calculations on $\Tr(A_j)$ and $\Tr(B_j)$ to obtain the statement of Equation \ref{eqn:upperboundsum}.
\begin{align*}
\Tr A_{j}  &\leq \Tr \left[A_{j-1} \exp(\mu \Phi(\sigma_{j-1}) )\right]\\
&\leq \Tr \left[A_{j-1} (\id + \mu \exp(\mu) \Phi(\sigma_{j-1}) )\right]\\
&= (\Tr A_{j-1}) (1 + \mu\exp(\mu) \Tr \left(\rho_{j-1} \Phi(\sigma_{j-1}) )\right) \\
&\leq  (\Tr A_{j-1}) \exp (\mu\exp(\mu) \Tr \left(\rho_{j-1} \Phi(\sigma_{j-1}) )\right)
\end{align*}
Similarly,
\begin{align*}
\Tr B_{j}  &\leq \Tr \left[B_{j-1} \exp(-\mu \Phi^\dagger(\rho_{j-1}) )\right]\\
&\leq \Tr \left[B_{j-1} (\id -\mu \exp(-\mu) \Phi^\dagger(\rho_{j-1}) )\right]\\
&= (\Tr B_{j-1}) (1 - \mu\exp(-\mu) \Tr \left(\rho_{j-1} \Phi(\sigma_{j-1}) )\right) \\
&\leq  (\Tr B_{j-1}) \exp (-\mu\exp(-\mu) \Tr \left(\rho_{j-1} \Phi(\sigma_{j-1}) )\right)
\end{align*}
\noindent Again rearranging and taking matrix logarithm on both sides:
\begin{align}
\log\frac{\Tr A_j}{\Tr A_{j-1}} &\leq \mu\exp(\mu) \Tr \left(\rho_{j-1} \Phi(\sigma_{j-1})\right)   &   \log\frac{\Tr B_j}{\Tr B_{j-1}} &\leq -\mu\exp(-\mu) \Tr \left(\rho_{j-1} \Phi(\sigma_{j-1})  \right)  
\end{align}
\noindent Thus, an upper bound on $\Delta S(\rho^* \| \rho_j) + \Delta S(\sigma^* \| \sigma_j)$ is:
\begin{align*}
    \Delta S(\rho^* \| \rho_j) + \Delta S(\sigma^* \| \sigma_j) &\leq \mu\exp(\mu) \Tr \left(\rho_{j-1} \Phi(\sigma_{j-1}) \right) -\mu\exp(-\mu) \Tr \left(\rho_{j-1} \Phi(\sigma_{j-1})\right)
\end{align*}
\end{proof}

\subsection{Proof of Theorem \ref{thm:qre_mwu}}
In order to prove Theorem~\ref{thm:qre_mwu}, we utilize Corollary \ref{corollary:inequality} and take the limit of the inequalities as $\mu\to 0$.
\begin{proof}
Consider the MMWU update from time $j-1$ to $j$. As $\mu \to 0$, $\rho_j$ and $\sigma_j$ do not change more than $O(\mu)$. Indeed, since all payoffs in the game are bounded in the MMWU update from time $j-1$ to $j$, all entries in the numerators increase by at most $\exp(O(\mu)) = 1 + O(\mu)$. Likewise, the denominator is at least as large, but also upper bounded by the previous value of the denominator multiplied by $(1+O(\mu))$. Hence, every entry in the outputs $\rho_j$ and $\sigma_j$ are at most $O(\mu)$ from $\rho_{j-1}$ and $\sigma_{j-1}$. Moreover, we have that $Tr (\rho_j \Phi(\sigma_{j-1}))$, $Tr (\rho_{j-1}\Phi(\sigma_j))$ and $Tr (\rho_{j-1}\Phi(\sigma_{j-1}))$ are all within $O(\mu)$ of each other. 
Using Taylor expansion, both the upper bounds and lower bounds  given in Equations \ref{eqn:lowerboundsum},\ref{eqn:upperboundsum} are of the order of $O(\mu^2)$ and the theorem follows.

\end{proof}

\section{Proofs and Additional Results from Section \ref{sec:replicator}}
\label{appsecs:proofs_replicator}
In this section, we focus on the key results shown in Section \ref{sec:replicator}. In particular, we first present some connections between the quantum and classical replicator dynamics. Then, we show the comprehensive proof of Theorem \ref{thm:recurrence}, which relies on Lemmas \ref{lemma:volumeconservation} and \ref{lemma:boundedorbits}. In addition, the proofs of all other results can also be found in this section.
\subsection{Proof of Observation \ref{obs:equivalence}}
\begin{proof}
    It suffices to show that at the limit, the integrals $A(t)$ and $B(t)$ defined in \ref{eqn:A_t} and \ref{eqn:B_t} can be written in the discrete summation form as seen in Algorithm \ref{alg:quantum}. First we write the Riemann sum for $A(t)$ by taking $j$ infinitesimal intervals in time interval $[0,t]$, each of width $\mu$:
    \begin{align*}
        A(t) &= \int_0^t \Phi(\sigma(\tau))d\tau = \sum_{i=0}^{j-1} \left(\frac{t}{j}\right) \Phi(\sigma_i)
    \end{align*}
    However $t = \mu j$, so the above can be written as $\sum_{i=0}^{j-1} \mu \Phi(\sigma_i)$. Taking limit as $\mu \to 0$, it is clear to see that the matrix exponent of the continuous-time trajectory $A(t)$ is equal to its discrete-time counterpart $A_j$. Hence, the trajectories $\rho(t)$ and $\rho_j$ are equivalent at the limit. A similar argument holds for $\sigma(t)$ and $\sigma_j$ at the limit.
\end{proof}

\subsection{Connections between Quantum and Classical Replicator Dynamics}
In this section we show that in the special case where commutativity holds, the quantum replicator dynamics presented in Section \ref{sec:replicator} are equivalent to that of the classical replicator equations. Recall that we used the integral form of the replicator dynamics in the main text. However, typically in the classical case one can write the replicator dynamics in a form that represents the utilities obtained by each player: 
\[
\dot{x} = x\left(Ay - \left(x^\top A y\right)\Vec{\mathbb{1}}\right)
\]
where $x$ and $y$ are n-dimensional probability vectors representing the strategies of each player, $A$ is the payoff matrix and $\vec{\mathbb{1}}$ is the n-dimensional all-one vector. In the quantum setting, we require a few additional results and assumptions in order to write the Equations \ref{eqn:replicator} in the classical form. The following lemma arrives directly as a result of the series definition of matrix exponential.
\begin{lemma}
\label{lemma:exp_deriv}
If matrices $P$ and $\frac{dP}{dt}$ commute then
\[\frac{d(\exp(P))}{dt}= \frac{dP}{dt}\cdot \exp(P) = \exp(P)\cdot \frac{dP}{dt}\]
\end{lemma}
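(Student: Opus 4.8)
The plan is to work directly from the power series definition of the matrix exponential, $\exp(P) = \sum_{n=0}^{\infty} \frac{1}{n!} P^n$, and differentiate term by term with respect to $t$. Term-by-term differentiation is justified because on any compact time interval the entries of $P(t)$ stay bounded, so the series and its formal derivative converge absolutely and uniformly; this lets me exchange the sum and the derivative without difficulty. The real content of the lemma lies in handling the derivative of each power $P^n$, where non-commutativity of $P$ and $\frac{dP}{dt}$ would normally obstruct the naive formula.

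First I would apply the (non-commutative) product rule to $P^n = P \cdot P \cdots P$, which yields
\begin{equation*}
\frac{d}{dt}\left(P^n\right) = \sum_{k=0}^{n-1} P^{k}\,\frac{dP}{dt}\,P^{n-1-k}.
\end{equation*}
Here is where the hypothesis enters: since $P$ and $\frac{dP}{dt}$ commute, each summand satisfies $P^{k}\,\frac{dP}{dt}\,P^{n-1-k} = \frac{dP}{dt}\,P^{n-1}$, so all $n$ terms coincide and the sum collapses to $n\,\frac{dP}{dt}\,P^{n-1}$. Substituting this into the differentiated series and reindexing the summation by $m = n-1$ gives
\begin{equation*}
\frac{d}{dt}\exp(P) = \sum_{n=1}^{\infty}\frac{n}{n!}\,\frac{dP}{dt}\,P^{n-1} = \frac{dP}{dt}\sum_{m=0}^{\infty}\frac{1}{m!}P^{m} = \frac{dP}{dt}\cdot\exp(P).
\end{equation*}

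For the second equality, I would observe that commutativity of $P$ with $\frac{dP}{dt}$ propagates to $\exp(P)$: because $\exp(P)$ is a limit of polynomials in $P$, and $\frac{dP}{dt}$ commutes with every power of $P$, it commutes with $\exp(P)$ as well, giving $\frac{dP}{dt}\cdot\exp(P) = \exp(P)\cdot\frac{dP}{dt}$. I do not anticipate a genuine obstacle here; the only subtlety worth stating carefully is the collapse of the product-rule sum under the commutativity assumption, since that is precisely the step that fails in the general (quantum, non-commuting) setting and motivates why the paper must treat the non-commutative case separately rather than reusing the classical replicator formula.
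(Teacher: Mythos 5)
Your proof is correct and follows essentially the same route as the paper: differentiate the power series of $\exp(P)$ term by term and use commutativity of $P$ with $\frac{dP}{dt}$ to factor out the derivative. In fact, you make explicit the two steps the paper leaves implicit — the justification for exchanging sum and derivative, and the collapse of the non-commutative product-rule sum $\sum_{k=0}^{n-1} P^{k}\,\frac{dP}{dt}\,P^{n-1-k}$ to $n\,\frac{dP}{dt}\,P^{n-1}$ — so your write-up is a more rigorous rendering of the same argument.
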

\begin{proof}
\begin{align*}
    \frac{d}{dt}(\exp(P)) &= \frac{d}{dt}\left[\mathbb{1} + P + \frac{1}{2} P^2 + \frac{1}{3!} P^3 + \cdots \right]\\
    &= \frac{dP}{dt} \left[\mathbb{1} + P + \frac{1}{2} P^2 + \cdots\right]\\
    &= \frac{dP}{dt}\cdot \exp(P) = \exp(P)\cdot\frac{dP}{dt} \qquad \mbox{(\text{by commutativity})}
\end{align*}
\end{proof}

By leveraging Lemma \ref{lemma:exp_deriv} and assuming that commutativity holds, we can state the following lemma:

\begin{lemma}
\label{lemma:replicator}
 If $\int_0^t \Phi(\sigma(\tau))d\tau$ \& $\Phi(\sigma(t))$, and $\int_0^t \Phi^\dagger(\rho(\tau))d\tau$ \& $\Phi^\dagger(\rho(t))$ commute, then the replicator system when applied to a two-player zero-sum quantum game is equivalent to:
\begin{eqnarray}
d\rho/dt &=& \rho [\Phi(\sigma)-\Tr(\rho \Phi(\sigma))\mathbb{1}]\label{eqn:replicator1_commute}\\
d\sigma/dt &=& \sigma [-\Phi^\dagger(\rho)+\Tr(\sigma \Phi^\dagger(\rho))\mathbb{1}]\label{eqn:replicator2_commute}
\end{eqnarray}
\end{lemma}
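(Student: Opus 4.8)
The plan is to differentiate the quotient $\rho = \exp(A)/\Tr(\exp(A))$ directly, using the commutativity hypotheses to invoke Lemma~\ref{lemma:exp_deriv}, and then simplify the resulting expression into the stated replicator form. The key observation is that $A(t) = \int_0^t \Phi(\sigma(\tau))d\tau$ has time derivative $\dot A = \Phi(\sigma(t))$ by the fundamental theorem of calculus, and the hypothesis is precisely that $A$ and $\dot A = \Phi(\sigma)$ commute. This is the setting in which Lemma~\ref{lemma:exp_deriv} applies, so I would first record that $\frac{d}{dt}\exp(A) = \exp(A)\,\Phi(\sigma) = \Phi(\sigma)\,\exp(A)$.

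First I would apply the quotient rule to $\rho = \exp(A)/\Tr(\exp(A))$. Writing $Z = \Tr(\exp(A))$, I have
\begin{equation*}
\frac{d\rho}{dt} = \frac{1}{Z}\frac{d}{dt}\exp(A) - \frac{\exp(A)}{Z^2}\frac{dZ}{dt}.
\end{equation*}
For the second term I need $\frac{dZ}{dt} = \frac{d}{dt}\Tr(\exp(A))$; using the cyclicity of trace together with Lemma~\ref{lemma:exp_deriv}, this equals $\Tr\big(\exp(A)\,\Phi(\sigma)\big)$. Substituting $\frac{d}{dt}\exp(A) = \exp(A)\,\Phi(\sigma)$ and recalling $\rho = \exp(A)/Z$, the first term becomes $\rho\,\Phi(\sigma)$ and the second becomes $\rho \cdot \Tr(\rho\,\Phi(\sigma))$, since $\frac{1}{Z}\Tr(\exp(A)\Phi(\sigma)) = \Tr(\rho\,\Phi(\sigma))$. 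Collecting terms yields exactly $\dot\rho = \rho\big[\Phi(\sigma) - \Tr(\rho\,\Phi(\sigma))\mathbb{1}\big]$, which is Equation~\ref{eqn:replicator1_commute}.

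The computation for $\sigma$ is entirely symmetric: I would set $B(t) = -\int_0^t \Phi^\dagger(\rho(\tau))d\tau$ so that $\dot B = -\Phi^\dagger(\rho)$, and apply the same quotient-rule argument under the corresponding commutativity assumption on $B$ and $\Phi^\dagger(\rho)$, producing Equation~\ref{eqn:replicator2_commute} with the sign carried through from $\dot B$. The main obstacle — and the reason the commutativity hypothesis is stated explicitly — is that without it one \emph{cannot} pull the factor $\Phi(\sigma)$ outside of $\frac{d}{dt}\exp(A)$: in general $\frac{d}{dt}\exp(A) = \int_0^1 \exp(sA)\,\dot A\,\exp((1-s)A)\,ds$, and this fails to factor as $\exp(A)\dot A$ unless $A$ and $\dot A$ commute. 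Thus the entire reduction to classical form hinges on Lemma~\ref{lemma:exp_deriv}, and the remaining steps are routine applications of the quotient rule and trace cyclicity.
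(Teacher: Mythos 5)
Your proposal is correct and follows essentially the same route as the paper's proof: both invoke Lemma~\ref{lemma:exp_deriv} under the hypothesis that $A$ and $\dot A = \Phi(\sigma)$ commute, then apply the quotient rule to $\rho = \exp(A)/\Tr(\exp(A))$ and collect terms, handling $\sigma$ symmetrically with the sign from $\dot B = -\Phi^\dagger(\rho)$. The only cosmetic slip is attributing the step $\frac{d}{dt}\Tr(\exp(A)) = \Tr\big(\exp(A)\Phi(\sigma)\big)$ to cyclicity of the trace, when it in fact uses linearity (the trace commutes with the time derivative) together with Lemma~\ref{lemma:exp_deriv}.
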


\begin{proof}
By definition we have that  $\rho=\exp(A)/\Tr(\exp(A))$ where $A = \int_0^t \Phi(\sigma(\tau))d\tau$.
By applying Lemma \ref{lemma:exp_deriv} we have that if $\int_0^t \Phi(\sigma(\tau))d\tau$ and $\Phi(\sigma(t))$ commute then
$d \exp(A)/dt= \exp(A)\Phi(\sigma)= \Phi(\sigma)\exp(A)$. 

 \begin{eqnarray*}
d\rho/dt &=& \frac{(d \exp(A)/dt) \Tr(\exp(A)) - \exp(A) d(\Tr(\exp(A)))/dt}{(\Tr \exp(A))^2}\\
  &=& \frac{(d\exp(A)/dt) \Tr(\exp(A))-\exp(A) \Tr(d\exp(A)/dt)}{(\Tr \exp(A))^2}\\
  &=& \frac{\exp(A)\Phi(\sigma) \Tr(\exp(A))- \exp(A) \Tr(\exp(A)\Phi(\sigma))}{(\Tr \exp(A))^2}\\
  &=& \rho [\Phi(\sigma)-\Tr(\rho \Phi(\sigma))\mathbb{1}]
\end{eqnarray*}
The proof in the case of $d\sigma/dt$ is similar.
\end{proof}

We can see that Equations \ref{eqn:replicator1_commute} and \ref{eqn:replicator2_commute} take a familiar form - indeed, the quantum replicator dynamics model the difference in utility obtained by each player compared to the average utility in much the same way as the classical case. In general, commutativity holds only in the case when all matrices $\Phi^\dagger(\rho(t))$ and $\Phi(\sigma(t))$ are diagonal, which is precisely the classical setting. The assumption that commutativity holds in the quantum setting is a very strong one, and in general commutativity does not hold once we enter the realm of quantum information.

Reminiscent of a similar result in classical evolutionary game theory, one can write a proof for the invariance of total entropy directly using the formulation of replicator dynamics. However given the commutativity assumption, this `proof' only holds in the case where we are writing classical probability vectors and game payoff matrices in the quantum notation. 

\begin{proof}[Proof of Theorem \ref{thm:qre_replicator} (Classical case)]
We will prove that  $d\mathrm{Tr}\big(\rho^* (\log \rho^*-\log \rho(t)) +  \sigma^* (\log \sigma^*-\log \sigma(t)) \big)/dt=0$.  To do that we will focus on the terms related to the first agent, i.e.,  $d\mathrm{Tr}\big(\rho^* (\log \rho^*-\log \rho(t))\big)/dt= -
d\mathrm{Tr}\big(\rho^* \log \rho(t))\big)/dt$.

\begin{align*}
d\mathrm{Tr}\big(\rho^* \log \rho(t))\big)/dt &= \mathrm{Tr}\big(\rho^* d(\log \rho(t))/dt\big)\\
  &= \mathrm{Tr}\big(\rho^* \rho(t)^{-1} d\rho(t)/dt \big)~~~~(\text{by Lemma}~ \ref{lemma:exp_deriv}\footnotemark)\\
  &=\mathrm{Tr}\big(\rho^* \rho^{-1}  \rho [\Phi(\sigma)-\Tr(\rho \Phi(\sigma))I] \big)~~~~(\text{by Lemma \ref{lemma:replicator}})\\
  &= \mathrm{Tr}\big(\rho^*  [\Phi(\sigma)-\Tr(\rho \Phi(\sigma))I] \big)\\
  &=\mathrm{Tr}\big(\rho^*  \Phi(\sigma)\big)-\Tr\big(\Tr(\rho \Phi(\sigma))\rho^*]\big)\\
  &= \mathrm{Tr}\big(\rho^*  \Phi(\sigma)\big) - \Tr(\rho \Phi(\sigma))~~~~\text{(Since } \Tr\big(\rho^*\big)=1)\\
  &= \mathrm{Tr}\big(\rho^*  \Phi(\sigma^*)\big) - \Tr(\rho \Phi(\sigma))
\end{align*}\footnotetext{Note here that $\rho(t)^{-1}$ exists since the exponential of a matrix is always an invertible matrix.}

The last line comes from the assumption that the equilibrium $(\rho^*,\sigma^*)$ is `fully mixed', i.e. the payoff of an agent when deviating from the Nash equilibrium to any other strategy remains exactly equal to her equilibrium payoff (e.g., Rock-Paper-Scissors). A similar analysis for the second agent results in:

\begin{eqnarray*}
d\mathrm{Tr}\big(\sigma^* \log \sigma(t))\big)/dt &=& - \Tr\big(\sigma^*  \Phi^\dagger(\rho^*)\big) + \Tr(\sigma \Phi^\dagger(\rho))
\end{eqnarray*}

These terms cancel out and the theorem follows.
\end{proof}

However in general we cannot use the above formulation, but rather need an argument tailored to the quantum setting using Theorem \ref{thm:qre_mwu} and Observation \ref{obs:equivalence}. The general proof of Theorem \ref{thm:qre_replicator} is presented here:
\subsection{Proof of Theorem \ref{thm:qre_replicator}}

In the proof of this theorem, we leverage Observation \ref{obs:equivalence} and Theorem \ref{thm:qre_mwu}. 
\begin{proof}
First we note that by expanding the limit definition of the derivative we obtain:
    \begin{align*}
        &\frac{d\big(S(\rho^* \| \rho(t))+S(\sigma^* \| \sigma(t))\big)}{dt} \\ =& \lim_{h\to 0}\frac{\big(S(\rho^* \| \rho(t+h))+S(\sigma^* \| \sigma(t+h))\big)-\big(S(\rho^* \| \rho(t))+S(\sigma^* \| \sigma(t))\big)}{h}
    \end{align*}

Here, we apply Observation \ref{obs:equivalence} in the following manner: If we perform the substitution $j\to t+h$, $j-1 \to t$ and $\mu \to h$, we obtain a 
reformulation of the same limit in the language of
 stepsizes of duration $\mu$. As $\mu \to 0$,
 the difference between the continuous time flow solution and its discrete-time Euler approximation is of order $O(\mu^2)$. Thus, we can compute this limit evaluated along the points of the discrete-time MMWU trajectory.
 This limit by Theorem 3.5 is equal to zero. 

    


    
\end{proof}

\subsection{Proof of Proposition \ref{prop:map1}}
\begin{proof}
We first consider the definition of $\rho'(t)$.
\begin{align*}
    \rho'(t) &= \frac{\exp(A'(t))}{\Tr(\exp(A'(t)))}\\
    &= \frac{\exp\left(\int_0^t \Phi(\sigma(\tau))d\tau - (v^\dag A(t) v)\mathbb{1}\right)}{\Tr\left( \exp\left(\int_0^t \Phi(\sigma(\tau))d\tau - (v^\dag A(t) v)\mathbb{1}\right)\right)}\\
    &= \frac{\exp\left(\int_0^t \Phi(\sigma(\tau))d\tau\right)\cdot \exp\left(- (v^\dag A(t) v)\mathbb{1}\right)}{\Tr\left( \exp\left(\int_0^t \Phi(\sigma(\tau))d\tau \right)\cdot\exp\left(- (v^\dag A(t) v)\mathbb{1}\right)\right)}
\end{align*}
Note that the denominator can be written as the trace of a matrix where each diagonal entry is the corresponding value of $A$ multiplied by $\exp(-v^\dag A(t) v)$. Thus, the above can be rewritten as
\begin{align*}
    \rho'(t) &= \frac{\exp(A(t)) \cdot \exp\left(- (v^\dag A(t) v)\mathbb{1}\right)}{\exp(-v^\dag A(t) v)\cdot\Tr\left(\exp(A(t))\right)}\\
    &= \frac{\exp(A(t))}{\Tr(\exp(A(t)))}\cdot\mathbb{1}\\
    &= \rho(t)
\end{align*}
We can perform similar computations to show the same holds true for $\sigma'(t)$ and $\sigma(t)$.
\end{proof}

\subsection{Proof of Proposition \ref{prop:diffeomorphic}}
\begin{proof}
Consider the map $f$ between $A'$ and $\rho$. We have shown in Proposition \ref{prop:map1} that the map in one direction is $\rho = f(A') =\frac{\exp(A')}{\Tr(\exp(A'))}$. This is continuously differentiable. Now let us consider the inverse map $f^{-1}$. In particular, we show that the inverse mapping exists and is equal to $A' = f^{-1}(\rho) = \log(\rho)-(v^\dag \log(\rho) v)\mathbb{1}$.

Indeed,
\begin{align*}
    f(f^{-1}(\rho)) &= \frac{\exp(\log(\rho)-(v^\dag \log(\rho) v)\mathbb{1})}{\Tr(\exp(\log(\rho)-(v^\dag \log(\rho) v)\mathbb{1}))}\\
    &= \frac{\exp\left(\log(\rho)\right)\cdot\exp\left(-(v^\dag \log(\rho) v)\mathbb{1}\right)}{\Tr\left(\rho\cdot\exp\left(-(v^\dag \log(\rho) v)\mathbb{1}\right)\right)}\\
        &= \frac{\rho\cdot\exp\left(-(v^\dag \log(\rho) v)\mathbb{1}\right)}{\exp\left(-(v^\dag \log(\rho) v)\right)}\\
        &= \rho\cdot\mathbb{1} = \rho
\end{align*}
where we have used the fact that $\Tr(\rho) = 1$.
Furthermore, note that the inverse mapping $f^{-1}(\rho) = \log(\rho)-(v^\dag \log(\rho) v)\mathbb{1}$ is smooth.

Hence, since $f$ is a bijection and the inverse map is differentiable, $A'(t)$ and $\rho(t)$ are diffeomorphic to each other. Similarly, $B'(t)$ and $\sigma(t)$ are diffeomorphic to each other.
\end{proof}

Lemmas \ref{lemma:volumeconservation} and \ref{lemma:boundedorbits} are crucial to proving our recurrence result. Typically, volume conservation is not difficult to prove, since the canonical transformation guarantees that the dynamical system becomes separable, which then allows for a simple application of Liouville's theorem. However, proving that the orbits of the dynamical system are bounded is more challenging. We carefully consider the definitions of $A'(t)$ and $B'(t)$ and utilize properties regarding their eigenvalues to show this property.

\subsection{Proof of Lemma \ref{lemma:volumeconservation}}
\begin{proof}
    Let $\Psi_{A'}$ denote the flow of $A'(t)$ and $\Psi_{B'}$ denote the flow of $B'(t)$. First let us consider $A'(t)$.
    \begin{align*}
       \dot{A}' = \frac{dA'}{dt} &= \Phi(\sigma(t)) - (v^\dag\Phi(\sigma(t))v)\mathbb{1}\\
        &= \Phi\left(\frac{\exp(B(t))}{\Tr(\exp(B(t)))}\right)-\left(v^\dag\Phi\left(\frac{\exp(B(t))}{\Tr(\exp(B(t)))}\right)v \right)\mathbb{1}
    \end{align*}
    
    Now, we derive the first order partial-derivatives of $F(A')$ with respect to $A'(t)$:
    \begin{align*}
        \frac{\partial F(A')}{\partial A'} &= \frac{\partial}{\partial A'} \bigg(\Phi\left(\frac{\exp(B(t))}{\Tr(\exp(B(t)))}\right)-\left(v^\dag\Phi\left(\frac{\exp(B(t))}{\Tr(\exp(B(t)))}\right)v\right) \mathbb{1} \bigg)\\
        &= \frac{\partial}{\partial A'} \bigg(\Phi\left(\frac{\exp(B'(t))}{\Tr(\exp(B'(t)))}\right)-\left(v^\dag\Phi\left(\frac{\exp(B'(t))}{\Tr(\exp(B'(t)))}\right)v\right) \mathbb{1} \bigg)\\
        &= 0
    \end{align*}
    Clearly, the partial derivative $\frac{\partial F(A')}{\partial A'}$ only depends on the value of $B'(t)$ and not $A'(t)$. This implies that the vector field is separable, and so the first order partial derivative with respect to $A'(t)$ is zero. By a similar argument, the partial derivative $\frac{\partial F(B')}{\partial B'}$ is also zero. By definition, the diagonal of the Jacobian matrix describing the vector fields is zero, and hence the divergence (which is the trace of the Jacobian) is zero as well. We can then directly apply Liouville's theorem to conclude that the flows $\Psi_{A'}$ and $\Psi_{B'}$ are volume preserving. 
\end{proof}

\subsection{Proof of Lemma \ref{lemma:boundedorbits}}
\begin{proof}
     First, $\rho(0) = \frac{\exp(A(0))}{\Tr(\exp(A(0)))}$ (resp. $\sigma(0)$) is bounded away from the boundary, since we assumed that 
    since $A(0)$ and $B(0)$ are finite matrices.  Moreover, $A'(0)$ and $B'(0)$ are also finite. 
    The sum of quantum relative entropies at time 0 is thus:
    \[
    S(\rho^* \| \rho(0))+ S(\sigma^* \| \sigma(0)) = \Tr(\rho^*(\log\rho^*-\log\rho(0))) + \Tr(\sigma^*(\log\sigma^*-\log\sigma(0)))
    \]
    This sum is finite since $\rho(0), \sigma(0)$ have full support. Indeed, the support of $\rho^*$ and $\sigma^*$ are contained in the support of $\rho(0)$ and $\sigma(0)$ respectively.
    By Theorem \ref{thm:qre_replicator}, the sum of quantum relative entropies $S(\rho^* \| \rho(t))+ S(\sigma^* \| \sigma(t))$ is bounded above by a fixed value $C$ for all time $t$. 
    Now, let $\lambda_{\min}(\rho^*)$ denote the minimum eigenvalue of $\rho^*$. We have that $\rho^* \geq \lambda_{\min}(\rho^*)\mathbb{1}$. Note the following:
    \begin{align*}
        \Tr(\rho^*(\log\rho^*))-\Tr(\rho^*(\log\rho(t))) &< C\\
        -\Tr(\rho^*(\log\rho(t))) &< D\\
        -\Tr(\lambda_{\min}(\rho^*)\mathbb{1}(\log\rho(t))) &< D
    \end{align*}
    where C and D are positive and finite real numbers. Using the fact that the trace of a matrix is basis independent, we use a basis where $\rho(t)$ is diagonal. Here, the minimum eigenvalue of $\rho(t)$ cannot go to zero, otherwise  $-\Tr(\lambda_{\min}(\rho^*)\mathbb{1}(\log\rho(t)))$ goes to $+\infty$, a clear contradiction to the inequality above. A similar argument holds for $\sigma(t)$. This is equivalent to saying that all the eigenvalues of $\rho(t)$ and $\sigma(t)$ lie in the interval $[\epsilon, 1-\epsilon]$ for some small $\epsilon>0$. 
    
    Note that the value of $A'_{1,1}(t)$ is always zero by design. We also know by construction that $A'(t)$ is Hermitian, since $\sigma(t)$ is Hermitian, $\Phi$ constitutes a hermicity preserving map and the integral of Hermitian matrices remains Hermitian. Hence, the smallest eigenvalue of $A'(t)$, $\lambda_{\min}(A'(t))$ is upper bounded by the smallest element on the diagonal of $A'(t)$, namely $\lambda_{\min}(A'(t)) \leq 0$, since $A'_{1,1}(t) = 0$. If the largest eigenvalue $\lambda_{\max}(A'(t))$ goes to $+\infty$, then the corresponding smallest eigenvalue of $\rho(t) = \frac{\exp(A'(t))}{\Tr(\exp(A'(t)))}$ goes to $0$. 
    Likewise if the smallest eigenvalue $\lambda_{\min}(A'(t))$ goes to $-\infty$, the smallest eigenvalue of $\rho(t)$ goes to $0$ as well.
    We have previously shown that this cannot occur, since the eigenvalues of $\rho(t)$ are all bounded away from zero. Hence, it follows that all eigenvalues of $A'(t)$ are bounded away from infinity. 
    
    Now, consider the scalar $v^\dag A'(t) w$ for arbitrary bounded vectors $v$ and $w$. $A'(t)$ is Hermitian, so by the spectral theorem it has an orthonormal eigenbasis. Now we express $v$ and $w$ in terms of this $n$-dimensional eigenbasis (which we denote by $\mathbb{e}$):
    \begin{align*}
        v = \sum_{i=1}^n a_i \mathbb{e}_i \quad\text{and}\quad w = \sum_{i=1}^n b_i \mathbb{e}_i 
    \end{align*}
    where $a_i$ and $b_i$ are scalars. Thus,
    \begin{align*}
        v^\dag A'(t) w &= \left(\sum_{i=1}^n a_i \mathbb{e}_i\right)^\dag A'(t) \left(\sum_{i=1}^n b_i \mathbb{e}_i\right)\\
        &= \sum_{i=1}^n a_i b_i \lambda_i
    \end{align*}
    where $\lambda_i$ are the eigenvalues of $A'(t)$ corresponding to $\mathbb{e}_i$. Here $a_i$, $b_i$ and $\lambda_i$ are all bounded, so it follows that all elements of matrix $A'(t)$ are bounded away from infinity as well. Likewise, the elements of matrix $B'(t)$ remain bounded from infinity and thus the dynamical system described by $A'(t)$ and $B'(t)$ have bounded orbits.

\end{proof}

\section{Experiments}
\label{appsecs:experiments}
In order to formalize a method for simulating and understanding quantum games, we propose a method to generate complex R matrices with the same eigenvalues as the values of the classical payoff matrix. This means that for any standard eigenbasis of a classical game, we create a new, complex eigenbasis where the eigenvalues are the same, but the eigenvectors are complex. Intuitively, this can be viewed as mapping a classical game matrix to the Hilbert space, effectively allowing generalizations of classically studied games to the semi-definite context. We introduce the following lemma, which describes a method to obtain a basis transformation to Hilbert space. 

\begin{lemma}\label{lemma:hadamardtransform}
    For any $n\times n$ two player zero-sum game with classical payoff matrix $P$, let $\lambda_{i,j} = P_{i,j}$ and let $V$ and $W$ be unitary matrices of size $n\times n$.
    Then, the matrix $R$ defined as
    \begin{equation}\label{eqn:basistransform}
        R = \sum_{i,j} \lambda_{i,j} (V_i\otimes W_j) (V_i^\dag \otimes W_j^\dag)
    \end{equation}
    produces a transformation from the classical payoff space $\mathcal{P}$ to the density matrix space $\mathcal{C}$. In particular, $R$
    is a complex $n^2\times n^2$ matrix which satisfies the following properties:
    \begin{itemize}
        \item The matrix $R$ is Hermitian.
        \item The eigenvalues of $R$ correspond to the elements of the classical payoff matrix $P$.
    \end{itemize}
\end{lemma}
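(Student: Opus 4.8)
The plan is to read the spectral decomposition of $R$ directly off its defining expression. First I would fix the interpretation that $V_i$ and $W_j$ denote the $i$-th and $j$-th columns of the unitaries $V$ and $W$, so that $V_i, W_j \in \mathbb{C}^n$ are unit vectors and $(V_i\otimes W_j)(V_i^\dag \otimes W_j^\dag) = \ket{V_i\otimes W_j}\bra{V_i\otimes W_j}$ is the rank-one orthogonal projector onto $V_i \otimes W_j \in \mathbb{C}^{n^2}$. With this reading the definition becomes $R = \sum_{i,j} \lambda_{i,j}\,\ket{V_i\otimes W_j}\bra{V_i\otimes W_j}$, a real-weighted sum of rank-one projectors.

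Hermiticity is then immediate and I would dispatch it first. Each coefficient $\lambda_{i,j} = P_{i,j}$ is real because $P$ is a real payoff matrix, and each rank-one projector satisfies $(\ket{u}\bra{u})^\dag = \ket{u}\bra{u}$; since a real linear combination of Hermitian matrices is Hermitian, we get $R = R^\dag$. This also certifies that $R$ is a legitimate payoff observable in the sense of Equation~\ref{eqn:payoffobservable}, which is what the ``transformation to density-matrix space'' claim amounts to.

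The substantive step — the only one using the hypotheses on $V$ and $W$ — is to check that the $n^2$ vectors $\{V_i \otimes W_j\}_{i,j}$ form an orthonormal basis of $\mathbb{C}^{n^2}$. Because $V,W$ are unitary, their columns $\{V_i\}$ and $\{W_j\}$ are orthonormal bases of $\mathbb{C}^n$, and multiplicativity of the inner product over tensor products gives $\braket{V_i\otimes W_j}{V_k\otimes W_l} = \braket{V_i}{V_k}\,\braket{W_j}{W_l} = \delta_{ik}\delta_{jl}$; as there are exactly $n^2$ such vectors in the $n^2$-dimensional space, they are an orthonormal basis. Consequently the displayed sum is precisely the eigendecomposition of $R$: it acts as multiplication by $\lambda_{i,j}=P_{i,j}$ on the line spanned by $V_i\otimes W_j$, so the multiset of eigenvalues of $R$ is exactly $\{P_{i,j}\}_{i,j}$, as claimed.

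I do not expect a genuine technical obstacle here: the entire content of the lemma is the observation that tensoring two orthonormal bases yields an orthonormal basis, which upgrades the weighted sum of projectors into a bona fide spectral decomposition. The only points demanding care are bookkeeping — confirming mutual orthogonality of the projectors so that no eigenvalue mixing occurs, and invoking realness of $P$ for Hermiticity — together with being explicit about the column-vs-entry indexing convention for $V_i, W_j$ so the reader can follow the inner-product computation.
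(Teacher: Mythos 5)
Your proof is correct and takes essentially the same approach as the paper: both read the defining sum as a spectral decomposition with orthonormal eigenvectors $V_i \otimes W_j$ (orthonormality following from unitarity of $V$ and $W$) and eigenvalues $P_{i,j}$, which simultaneously yields Hermiticity and the eigenvalue claim. The paper merely packages your projector-by-projector argument in matrix form, writing $R = (V\otimes W)\Lambda(V^\dag\otimes W^\dag)$ and noting that $V\otimes W$ is unitary, which is equivalent to your orthonormal-basis computation.
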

\begin{proof}
    First note that Equation \ref{eqn:basistransform} is equivalent to taking the following basis transformation:
    \[
    \sum_{i,j} \lambda_{i,j} (V_i\otimes W_j) (V_i^\dag \otimes W_j^\dag) = (V\otimes W){\Lambda} (V^\dag\otimes W^\dag)
    \]
    where $\Lambda = \begin{bmatrix}
    \lambda_{1,1} & \hdots & 0 \\
    \vdots & \ddots & \vdots\\
    0 & \hdots & \lambda_{n,n}
    \end{bmatrix}$ is the $n^2 \times n^2$ diagonal matrix formed by placing the elements of classical payoff matrix $P$ on the diagonal. Moreover, $V$ and $W$ are both unitary, so the tensor products $V\otimes W$ and $V^\dag\otimes W^\dag$ are also unitary. This holds because:
    \begin{align*}
        (V\otimes W)(V \otimes W)^\dag &= (V\otimes W)(V^\dag \otimes W^\dag)\\
        &= VV^\dag\otimes WW^\dag\\
        &= \mathbb{1}\otimes \mathbb{1} = \mathbb{1}
    \end{align*}
    
    This implies that $R$ can be written as a unitary diagonalization of the form $PDP^\dag$, where $P=(V\otimes W)$ and $D = \Lambda$. Hence $R$ is Hermitian and has real eigenvalues by the spectral theorem. In addition, by construction the eigenvalues of $R$ are the elements of the classical payoff matrix $P$. 
\end{proof}
    

Moreover, the initial conditions of a simulation of a two-player quantum zero-sum game can be similarly obtained by using the same orthonormal bases as in Lemma \ref{lemma:hadamardtransform} and applying a change of basis to the initial conditions in the standard basis. 

\subsection{Additional Experimental Details (Matching Pennies)}
In the main text, we showed several simulations that illustrate the theoretical results in Sections \ref{sec:mwu} and \ref{sec:replicator}. Here we present detailed descriptions of the simulations, as well as additional experiments which serve to corroborate auxiliary results in the paper.

To generate Figure \ref{fig:quantum_rps_entropy}, we used a modified version of Algorithm \ref{alg:quantum}, where instead of constant step-size $\mu$ we define $\mu = \log(1+\frac{1}{t^{a}})$, where $a$ is an exponent which determines the rate of decrease of the step-size, and $t$ is the time-step of the simulation. Intuitively, this means that for higher values of $a$, the step-sizes go to zero at a faster rate. Empirically, we observe that if the exponent is greater than $\frac{1}{2}$, then we can clearly see the phenomenon described in Theorem \ref{thm:qre_mwu}, since the quantum relative entropy in the system increases when the trajectories move away from the fully mixed Nash equilibrium. 

In the case of replicator dynamics, we show experimentally (Figure \ref{fig:qre1}) that the sum of quantum relative entropies for both players is a constant of motion as shown in Theorem \ref{thm:qre_replicator}. We first use Lemma \ref{lemma:hadamardtransform} to obtain the $R$ matrix of a quantum game given some complex basis and classical payoff matrix (Matching Pennies, for instance). Then, we run the discrete Algorithm \ref{alg:quantum} with decreasing step-size to compute the interior $\epsilon$-approximate Nash of the quantum game. The replicator equations are then solved given a set of initial conditions for each player. Thereafter, we compute the quantum relative entropy between each player's strategy at each discretized time step and the $\epsilon$-approximate Nash equilibrium.

Notice that although the quantum relative entropies of the first player (pink) and second player (purple) are oscillating over time, their sum remains approximately constant. 
\begin{figure}[H]
    \centering
      \includegraphics[width=.75\linewidth]{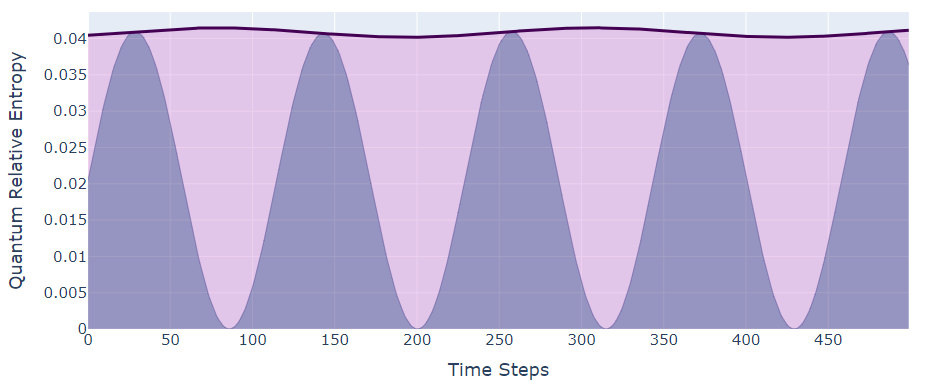}
    \caption{Approximately constant sum of quantum relative entropy values for two-player quantum Matching Pennies game.}
    \label{fig:qre1}
\end{figure}

Finally, in order to generate Figure \ref{fig:bloch_sphere}, we perform a similar simulation to the above. However, we also transform the trajectories of each player using the Pauli matrices, which are given as:
\begin{equation}
    \sigma_{x} = \begin{pmatrix}
    0 & 1\\
    1 & 0
    \end{pmatrix} \qquad \sigma_{y} = \begin{pmatrix}
    0 & -i\\
    i & 0
    \end{pmatrix} \qquad \sigma_{z} = \begin{pmatrix}
    1 & 0\\
    0 & -1
    \end{pmatrix}
\end{equation}
The expectations of the density matrices under this transformation (commonly referred to as the Bloch vector) produce the coordinates on the $x$, $y$ and $z$ axes within the Bloch sphere. We plot these coordinates for each time step and hence obtain the trajectories for each player in the Bloch sphere. 

\subsection{Larger-Scale Experiments}\label{appsecs:largerscale}
In the main text, we study a single-qubit system in the form of a Matching Pennies game. However, our results extend to games with larger strategy sets, some of which can be interpreted as multi-qubit systems. We study a $16\times16$ game with a payoff matrix in $[-1,1]$ which can be played via 2 qubits. Moreoever, this game is selected because it has an interior (and uniform) Nash equilibrium. Similarly, we generate an $64\times64$ generalization of the $16\times 16$ game to represent an example of a 3-qubit game with interior Nash. In Figure \ref{fig:rps_recurrence}, we plot the Frobenius norm between the initial condition and the system state when updated using quantum replicator dynamics (Equations \ref{eqn:replicator}). Notice that despite the relatively erratic behaviour of the strategies, they eventually return to the initial condition, implying that recurrence holds (Theorem \ref{thm:recurrence}).

\begin{figure}[ht]
    \centering
    \begin{minipage}{.9\linewidth}
      \centering
      \includegraphics[width=.95\linewidth]{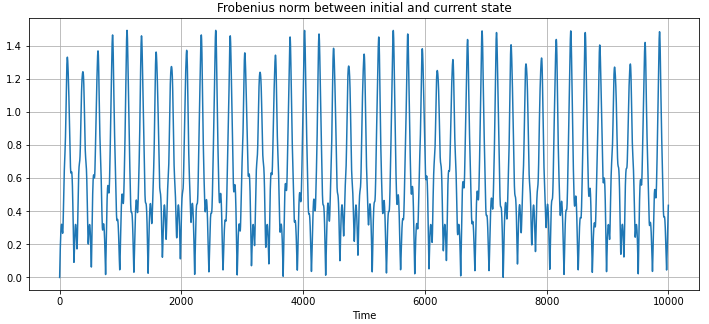}
      2-qubit system
    \end{minipage}\par\bigskip
    \begin{minipage}{.9\linewidth}
      \centering
      \includegraphics[width=.95\linewidth]{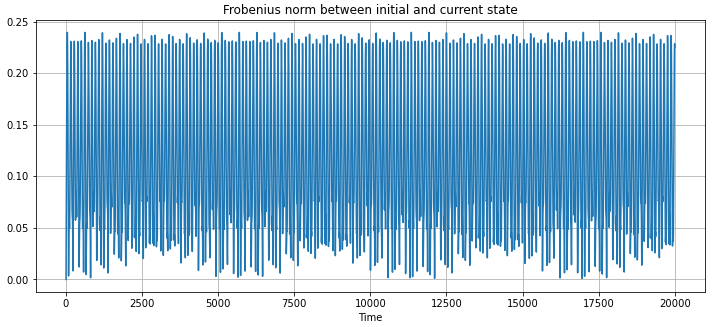}
      3-qubit system
    \end{minipage}
    \caption{Frobenius norm between the initial condition and current system state for 2 and 3 qubit systems with interior Nash equilibrium. Note that at approx. 7200 iterations the 2-qubit system returns arbitrarily close to 0, and the same occurs at approx. 12000 iterations for the 3-qubit system.}
    \label{fig:rps_recurrence}
\end{figure}

Furthermore, to confirm the volume conservation property shown in Theorem \ref{thm:qre_replicator} we perform the same simulation as in Figure \ref{fig:qre1} for the same 2 and 3 qubit systems as above, obtaining Figure \ref{fig:rps_qre}. However, a minor note is that we perform these simulations without the basis transform of Lemma \ref{lemma:hadamardtransform} to avoid potential numerical errors since the matrices are of larger dimension. The initial conditions are random density matrices, as before. Note that the sum of quantum relative entropies between each player and the interior Nash equilibrium is constant as expected from our theoretical results. 

\begin{figure}[ht]
    \centering
    \begin{minipage}{.9\linewidth}
      \centering
      \includegraphics[width=.95\linewidth]{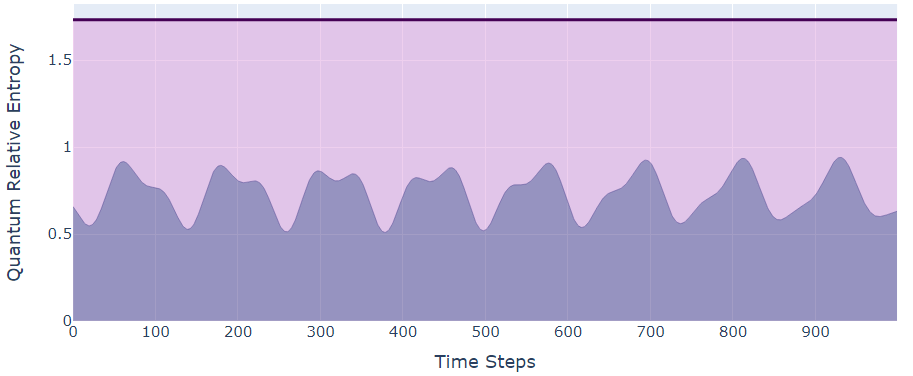}
      2-qubit system
    \end{minipage}\par\bigskip
    \begin{minipage}{.9\linewidth}
      \centering
      \includegraphics[width=.95\linewidth]{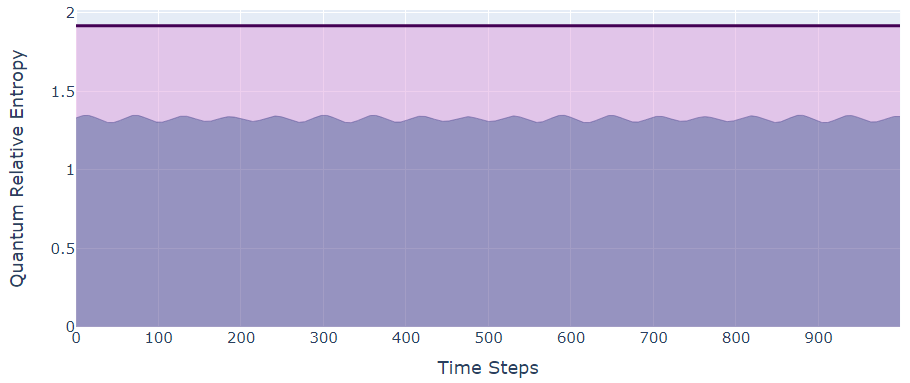}
      3-qubit system
    \end{minipage}
    \caption{Sum of quantum relative entropies of each player to the Nash equilibrium for 2 and 3 qubit systems with interior Nash.}
    \label{fig:rps_qre}
\end{figure}


\subsection{Reproducibility Details}
All experiments performed for this work were done using Python 3.7 and have been compiled into a Jupyter notebook for ease of viewing. Running the code requires basic scientific computing packages such as NumPy, SciPy and Cython, as well as data visualization packages such as Matplotlib and Plotly. In addition, we use QuTip \cite{johansson2012qutip}, a quantum simulation package. Most of our code has been edited such that it can be easily executed on a standard computer in a matter of minutes.

\end{appendices}

\end{document}